\newtheorem{theorem}{Theorem}
\newtheorem{proposition}{Proposition}
\newtheorem{definition}{Definition}
\newtheorem{lemma}{Lemma}
\DeclareMathOperator*{\argmax}{argmax}
\DeclareMathOperator*{\argmin}{argmin}
\begin{document}
	\title{Finite-key security of passive quantum key distribution}
	\author{Víctor Zapatero$^{1,2,3}$}
	\email{vzapatero@vqcc.uvigo.es}
	\author{Marcos Curty$^{1,2,3}$}
\affiliation{$^1$Vigo Quantum Communication Center, University of Vigo, Vigo E-36310, Spain}
\affiliation{$^2$Escuela de Ingeniería de Telecomunicación, Department of Signal Theory and Communications, University of Vigo, Vigo E-36310, Spain}
\affiliation{$^3$AtlanTTic Research Center, University of Vigo, Vigo E-36310, Spain}
\begin{abstract}
The passive approach to quantum key distribution (QKD) consists of eliminating all optical modulators and random number generators from QKD systems, in so reaching an enhanced simplicity, immunity to modulator side channels, and potentially higher repetition rates. In this work, we provide finite-key security bounds for a fully passive decoy-state BB84 protocol, considering a passive QKD source recently presented. With our analysis, the attainable secret key rate is comparable to that of the perfect parameter estimation limit, in fact differing from the key rate of the active approach by less than one order of magnitude. This demonstrates the practicality of fully passive QKD solutions.

\end{abstract}
\maketitle
\section{Introduction}\label{Introduction}
Quantum key distribution (QKD) allows to establish information-theoretically secure keys between remote locations through an insecure channel~\cite{PortmannRenner}. This security standard, which is a must to achieve long-term security guarantees, makes QKD a unique solution for private communications of the highest requisites. Since its conception in 1984~\cite{BB84}, QKD has experienced a remarkable progress. Nowadays, specialized companies supply integral QKD services~\cite{IdQuantique,Toshiba,qubridge,ThinkQuantum}, metropolitan QKD networks are being installed around the globe~\cite{SECOQC,FangXing,Sasaki,Stucki}, and a space-ground integrated QKD backbone with an extension of thousands of kilometres has been deployed~\cite{Chen}. 

Notably though, QKD security proofs rely on mathematical models that describe the behaviour of the QKD equipment, in so opening the door for mis-characterization loopholes. For this reason, strictly quantifying the level of security of QKD implementations is a thorny issue~\cite{Feihu}, and a critical vulnerability of real-life QKD systems is active modulation. Indeed, active modulators can be a source of information leakage in different ways, say, negligently encoding private information in undesired degrees of freedom, introducing correlations between adjacent pulses~\cite{Margarida,correlations,correlations_2}, or serving as a target for Trojan horse attacks (THAs)~\cite{Vakhitov,Gisin,Jain1,Jain2,Sajeed}. In the latter case, an eavesdropper (Eve) injects bright light pulses into a QKD system and measures the back-reflected light, possibly extracting information about the setting choices. Particularly, since the back-reflected light has passed through the modulators, it may be encoded with the same information as the signals prepared by the sender (Alice), or reveal the measurement basis selected by the receiver (Bob). In this regard, although one can model the information leakage and account for it in the estimation of the secret key length, this approach may severely affect the achievable performance of QKD according to the existing analyses~\cite{Lucamarini,Tamaki,Weilong,Navs}, unless sufficiently strong isolation is introduced. From this perspective, a better solution is provided by passive QKD, which eliminates all active modulation from QKD devices and replaces it by post-selection. Remarkably, not only this solution confers immunity to modulator side channels, but it may also simplify the hardware layout and boost the clock rate of a QKD system, at the price of reducing the secret key rate per pulse by roughly an order of magnitude.

As an example, one can use coherent light to passively generate decoy states in different ways~\cite{WCP_1,WCP_2,WCP_3,WCP_4} (see also the experimental reports in~\cite{experiment_1,experiment_2,experiment_3,experiment_4,experiment_5}), or to passively prepare random photon polarizations in a plane for a passive realization of the BB84 protocol~\cite{passive_BB84}. Notably as well, passive BB84 encoding and passive decoy-state preparation can be combined in a single linear optical setup, as recently shown in~\cite{Mike,Zapatero}. Indeed, these two papers were soon followed by pioneering experiments that prove the feasibility of fully passive QKD~\cite{Lu,Hu}, and in fact fully passive twin-field QKD has also been envisioned~\cite{Mike_2}.

Considering the passive source devised in~\cite{Zapatero}, in this work we present finite-key security bounds for a fully passive decoy-state BB84 protocol, carefully merging the encoding strategy of~\cite{Mike} with a sophistication of the parameter estimation method in~\cite{Zapatero}. Remarkably, the secret key rate per pulse attainable with our analysis and our passive scheme is closer than ever reported to the corresponding key rate with an active setup.

The structure of the paper goes as follows. In Sec.~\ref{Assumptions}, we list the assumptions that underlie our security analysis. In Sec.~\ref{Characterization}, we provide a characterization of the passive QKD source we consider. In Sec.~\ref{post-selection}, we describe the encoding scheme we adopt, together with an insightful entanglement-based picture that arises from it. In Sec.~\ref{Protocol_description}, we give a description of the passive decoy-state BB84 protocol that we contemplate. Sec.~\ref{decoy} contains all the details of the decoy-state method, and in Sec.~\ref{secret_key_parameters} we derive the necessary estimates for the secret key parameters. To finish with, Sec.~\ref{performance} illustrates the performance of the proposed analysis, and Sec.~\ref{Outlook} ends up with a series of concluding remarks. For the reproducibility of the results, various appendices are included at the end of the paper as well.
\section{Assumptions}\label{Assumptions}
To begin with, let us enumerate various assumptions on which our analysis relies.\\

\textbf{Assumptions on Alice's and Bob's devices.} For the characterization of the fully passive source, we assume (i) perfect phase-randomization of the input laser/s, (ii) perfect-visibility interference at the beam splitters and polarizing beam splitters, and (iii) noiseless polarization and intensity measurements in Alice's side. As for Bob's measurement unit, we adopt the standard basis-independent detection efficiency assumption. Note, however, that this latter assumption could be removed by considering a measurement-device-independent configuration~\cite{MDI}.\\

\textbf{Adversary model.} We assume that Eve holds a quantum probe, $E$, and a classical register, $R$, and consider the following round-by-round model. In the first round, she initializes her probe $E$ to an arbitrary state, $\xi_{\rm E}^{(1)}$, and couples it to the signal in the channel through an arbitrary unitary operation, $\hat{U}^{(1)}_{\rm BE}$. At the end of the round, based on $\hat{U}^{(1)}_{\rm BE}$, $\xi_{\rm E}^{(1)}$, and the outcome of a potential measurement on $E$, the classical register $R$ updates from an initial state $R_{0}$ to a new state $R_{1}$, which determines (i) Eve's choice for the next unitary operation, $\hat{U}^{(2)}_{\rm BE}$, and (ii) how the state of the probe system $E$ is updated to a new state $\xi_{\rm E}^{(2)}$. Subsequent rounds proceed in the same way: at the end of round $u$, the state of the register, $R_{u}$, fully determines the next intervention of Eve, generally influenced by the record of all unitaries, $\{\hat{U}^{(v)}_{\rm BE}\}_{v\leq{}u}$, probe states, $\{\xi_{\rm E}^{(v)}\}_{v\leq{}u}$, and measurement outcomes.

Importantly, a consequence of this model is that earlier rounds can only affect the $u$-th round through their influence on $R_{u-1}$. From a technical point of view, the model is invoked to assure that the trace distance (TD) argument of Appendix~\ref{TD_argument} is applicable to the conditional detection statistics of Kato's inequality~\cite{Kato} (a similar adversary model is assumed, for instance, in~\cite{Margarida,Navs,Metger}). This said, we remark that, if each protocol round starts only after the conclusion of the previous one ---a sequential structure that can be practically enforced by Alice and Bob--- this adversary model is fully general.

\section{Characterization of the fully passive source}\label{Characterization}

The passive source we consider is due to~\cite{Zapatero} and shown in Fig.~\ref{fig:PT}. Although it uses four independent lasers, it can also be realized with an equivalent single-laser configuration and suitable interferometry, as discussed in~\cite{Mike,Zapatero} and implemented in~\cite{Lu,Hu}.

\begin{figure}[!htbp]
	\centering 
	\includegraphics[width=8.2cm,height=4.2cm]{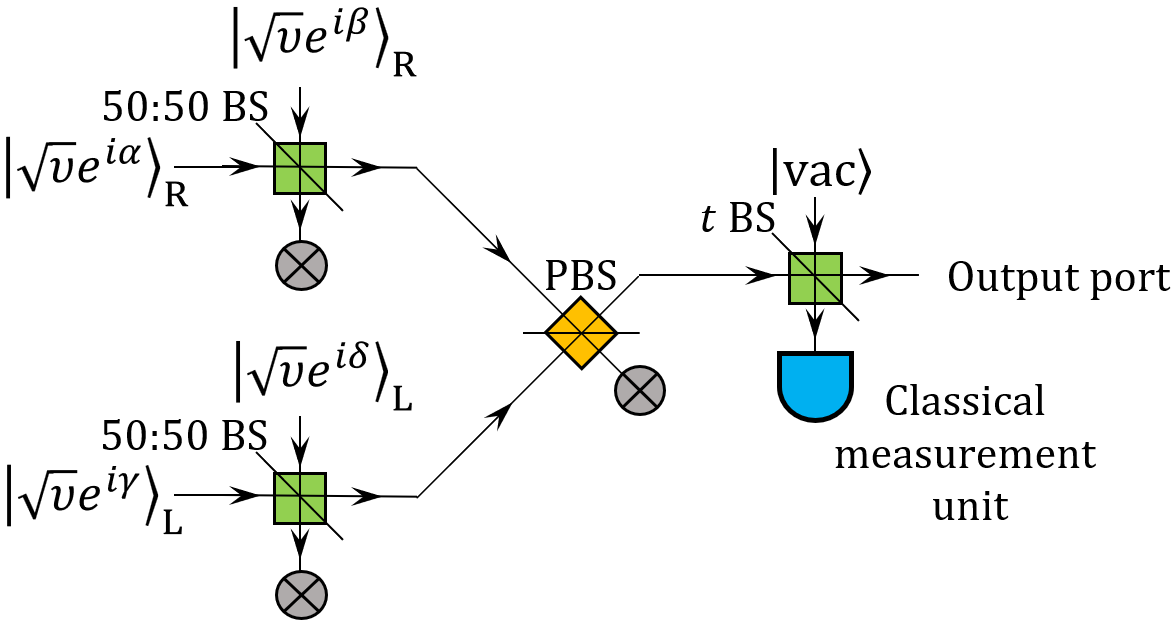}\\
	\caption{Possible architecture of a fully passive QKD source due to~\cite{Zapatero}. All coherent states in the figure have a common large intensity $\nu$ and independent random phases $\alpha$, $\beta$, $\gamma$ and $\delta$. These random phases can be achieved by operating the lasers under gain-switching conditions, \textit{i.e.}, by turning the lasers on and off between pulses. The interference in the 50:50 beam splitter (BS) of the top (bottom) arm yields a coherent state with random intensity~\cite{WCP_2} ---dependent on the phase difference $\beta-\alpha$ ($\delta-\gamma$)--- and right-handed (left-handed) circular polarization, R (L). When these two pulses interfere in the polarizing BS (PBS), a new coherent state is generated, whose intensity is the sum of the input intensities and whose polarization is randomly distributed in the RL sphere~\cite{passive_BB84}. Lastly, this final state enters a BS with transmittance $t\ll{1}$. The transmitted signal, which is attenuated to the single-photon level, is sent to Bob, and the reflected signal reaches a classical measurement unit for the accurate determination of the prepared intensity and polarization. Unused spatial modes are tagged by the symbol ``$\otimes$" in the figure.}
	\label{fig:PT}
\end{figure}

In the figure, $\ket{\tau}_{\rm R(L)}$ denotes a right-handed (left-handed) circularly polarized coherent state with complex amplitude $\tau\in\mathbb{C}$. That is to say, $\ket{\tau}_{\rm R(L)}=\exp\bigl\{\tau{}a_{\rm R(L)}^{\dagger}-\tau^{*}{}a_{\rm R(L)}\bigr\}\ket{\rm vac}$, where $\ket{\rm vac}$ is the vacuum state and $a^{\dagger}_{\rm R(L)}$ and $a_{\rm R(L)}$ respectively denote the creation and annihilation operators of a right-handed (left-handed) circular polarization mode. Throughout this work, we shall refer to the Bloch sphere spanned by $\ket{\rm R}=a^{\dagger}_{\rm R}\ket{\rm vac}$ and $\ket{\rm L}=a^{\dagger}_{\rm L}\ket{\rm vac}$ as the RL sphere, an arbitrary state of which reads
\begin{equation}\label{creation_operator}
\cos\left(\frac{\theta}{2}\right)\ket{\rm R}+e^{i\phi}\sin\left(\frac{\theta}{2}\right)\ket{\rm L},
\end{equation}
for $\theta\in[0,\pi]$ (polar angle) and $\phi\in(-\pi,\pi]$ (azimuthal angle). As shown in~\cite{Zapatero}, the mixed output state of the fully passive source reads
\begin{equation}\label{sigma}
\sigma=\frac{1}{2\pi}\int_{-\pi}^{\pi}d\phi\int_{0}^{\pi}d\theta\int_{0}^{I^{*}_{\theta}}dI{}f(\theta,I)\sum_{n=0}^{\infty}\frac{e^{-I}{I}^{n}}{n!}\ketbra{n}{n}_{\theta,\phi},
\end{equation}
for
\begin{equation}\begin{split}\label{distribution}
&f(\theta,I)=\frac{1}{2\nu{}t{}\pi^{2}\sqrt{1-\frac{I}{2\nu{}t}\displaystyle{\cos^{2}\left(\frac{\theta}{2}\right)}}\sqrt{1-\frac{I}{2\nu{}t}\sin^{2}\displaystyle{\left(\frac{\theta}{2}\right)}}},\\
&\ket{n}_{\theta,\phi}=\frac{1}{n!}\left[\cos\left(\frac{\theta}{2}\right)a^{\dagger}_{\rm R}+e^{i\phi}\sin\left(\frac{\theta}{2}\right)a^{\dagger}_{\rm L}\right]^{n}\ket{\rm vac},
\end{split}\end{equation}
and $I^{*}_{\theta}=\min\left\{2\nu{}t/\cos^{2}\left(\theta/2\right),2\nu{}t/\sin^{2}\left(\theta/2\right)\right\}$, $\nu$ and $t$ being introduced in Fig.~\ref{fig:PT}. According to these equations, $\phi$ is uniformly distributed in the output state, but $\theta$ is coupled to the intensity $I$.

\section{Encoding scheme and entanglement-based picture}\label{post-selection}
We select the encoding scheme of~\cite{Mike} ---which post-selects polar (equatorial) regions of the Bloch sphere for key generation (parameter estimation)--- and the decoy-state scheme of~\cite{Zapatero} ---which uses decoy intervals in both bases---. According to our simulations, these choices lead to a better performance than, say, only post-selecting equatorial regions for the encoding, or using decoy states in the test basis alone.

To be precise, let $\Gamma^{\rm key}$ and $\Gamma^{\rm test}$ denote two arbitrary lists of settings. The post-selection regions are given by
\begin{equation}\begin{split}\label{key_regions}
&\Omega^{\rm R}_{j}=\biggl\{\phi\in(-\pi,\pi],\theta\in\left(0,\Delta\theta'\right),I\in{}I_{j}\biggr\},\\
&\Omega^{\rm L}_{j}=\biggl\{\phi\in(-\pi,\pi],\theta\in\left(\pi-\Delta\theta',\pi\right),I\in{}I_{j}\biggr\}
\end{split}\end{equation}
with $j\in\Gamma^{\rm key}$ for key generation, and
\begin{equation}\begin{split}\label{test_regions}
&\Omega^{\rm H}_{j}=\\
&\biggl\{\phi\in\left(-\Delta\phi,\Delta\phi\right),\theta\in\left(\frac{\pi}{2}-\Delta\theta,\frac{\pi}{2}+\Delta\theta\right),I\in{}I_{j}\biggr\},\\
&\Omega^{\rm V}_{j}=\\
&\biggl\{\phi\in\left(\pi-\Delta\phi,\pi+\Delta\phi\right),\theta\in\left(\frac{\pi}{2}-\Delta\theta,\frac{\pi}{2}+\Delta\theta\right),I\in{}I_{j}\biggr\}
\end{split}\end{equation}
with $j\in\Gamma^{\rm test}$ for parameter estimation (PE). The polarization acceptance regions in the RL sphere are illustrated in Figure~\ref{fig:post-selection}, where the notation $\ket{\rm H}=(\ket{\rm R}+\ket{\rm L})/\sqrt{2}$ and $\ket{\rm V}=(\ket{\rm R}-\ket{\rm L})/\sqrt{2}$ is introduced.
\begin{figure}[H]
	\centering 
	\includegraphics[width=5.1cm,height=5.1cm]{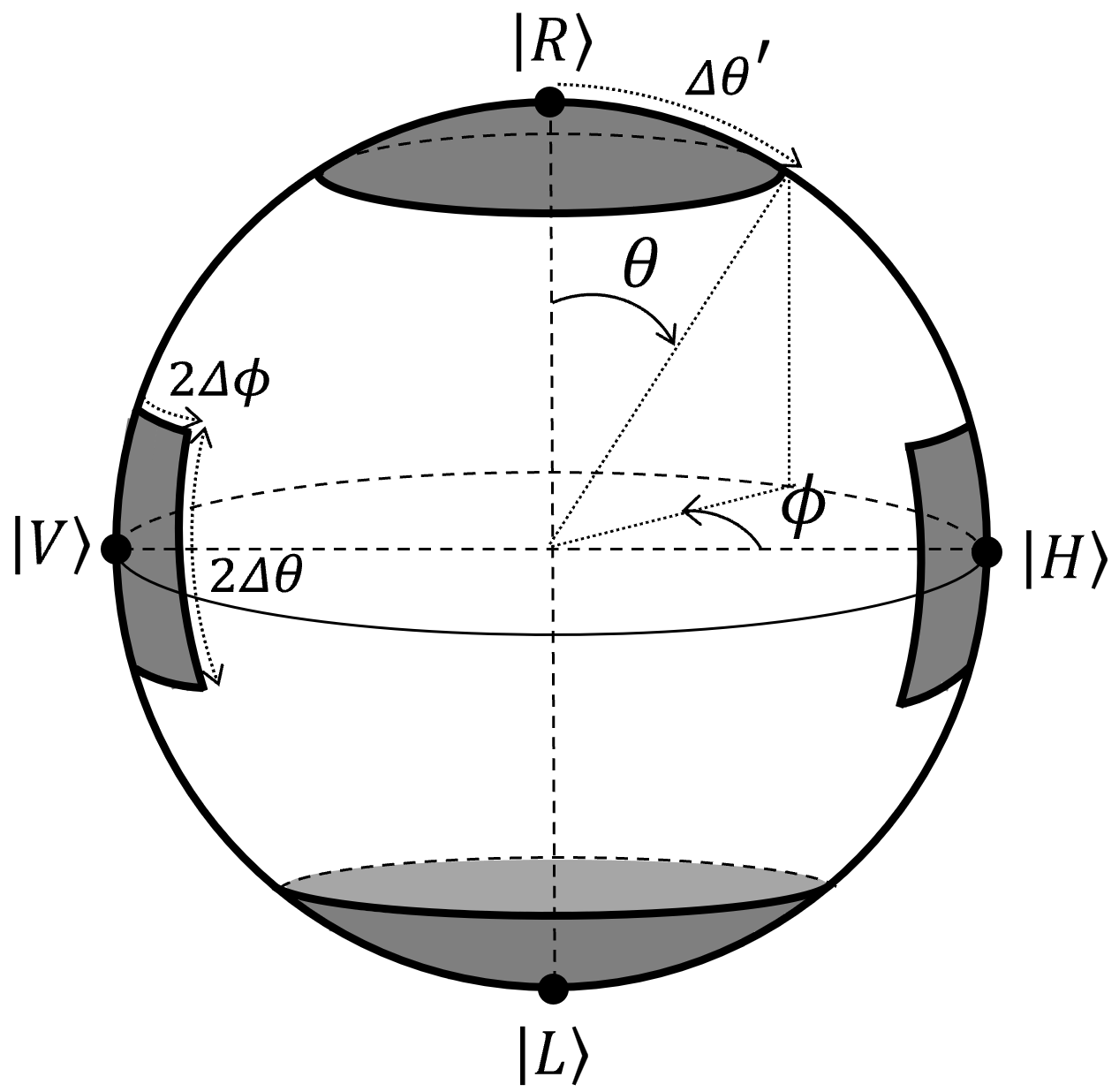}\\
	\caption{Post-selection of encoding regions in the RL sphere. While $\Delta\theta'$ characterizes the key regions (polar caps), $\Delta\theta$ and $\Delta\phi$ characterize the test regions (rectangular stripes).}
	\label{fig:post-selection}
\end{figure}
The post-selected state that arises from the acceptance region $\Omega^{\rm R(L)}_{j}$ can be written as
\begin{equation}\label{sigma_R}
\sigma^{\rm R(L)}_{j}=\frac{1}{\left\langle{1}\right\rangle_{\Omega^{\rm R(L)}_{j}}}{\left\langle\sum_{n=0}^{\infty}\frac{e^{-I}{I}^{n}}{n!}\ketbra{n}{n}_{\theta,\phi}\right\rangle_{\Omega^{\rm R(L)}_{j}}},
\end{equation}
where $\left\langle\cdot\right\rangle_{\Omega}$ denotes averaging with the weight $f(\theta,I)/2\pi$ over the region $\Omega$ of the $(\phi,\theta,I)$-space. Eq.~(\ref{sigma_R}) trivially decomposes into a classical mixture of Fock states,
\begin{equation}\label{sigma_R_2}
\sigma^{\rm R(L)}_{j}=\sum_{n=0}^{\infty}p_{n|j}^{\rm key}\sigma_{j,n}^{\rm R(L)},
\end{equation}
where we have introduced the shorthand notation
$p_{n|j}^{\rm key}=\left\langle{e^{-I}{I}^{n}/{n!}}\right\rangle_{\Omega^{\rm R}_{j}}/\langle{1}\rangle_{\Omega^{\rm R}_{j}}=\left\langle{e^{-I}{I}^{n}/{n!}}\right\rangle_{\Omega^{\rm L}_{j}}/\langle{1}\rangle_{\Omega^{\rm L}_{j}}$ and $\sigma_{j,n}^{\rm R(L)}=\bigl\langle{e^{-I}{I}^{n}/n!\ketbra{n}{n}_{\theta,\phi}}\bigr\rangle_{\Omega^{\rm R(L)}_{j}}\bigl/\bigl\langle{e^{-I}{I}^{n}/n!}\bigr\rangle_{\Omega^{\rm R(L)}_{j}}$. Naturally, the complete key-generation region of setting $j$, given by $\Omega_{j}^{\rm key}=\Omega^{\rm R}_{j}\cup{}\Omega^{\rm L}_{j}$, yields a state of the form $\sigma_{j}^{\rm key}=\sum_{n=0}^{\infty}p_{n|j}^{\rm key}\sigma_{j,n}^{\rm key}$, where
\begin{equation}\label{sigma_sn}
\sigma_{j,n}^{\rm key}=\frac{\sigma_{j,n}^{\rm R}+\sigma_{j,n}^{\rm L}}{2}.
\end{equation}
As for the test regions, we define $\Omega_{j}^{\rm test}=\Omega^{\rm H}_{j}\cup{}\Omega^{\rm V}_{j}$ and denote the post-selected states that arise from $\Omega^{\rm H}_{j}$, $\Omega^{\rm V}_{j}$ and $\Omega_{j}^{\rm test}$ respectively as $\sigma^{\rm H}_{j}$, $\sigma^{\rm V}_{j}$ and $\sigma_{j}^{\rm test}$, which satisfy completely analogous relations to those of Eqs.~(\ref{sigma_R}),~(\ref{sigma_R_2}) and~(\ref{sigma_sn}). In particular, the test-basis Fock states read $\sigma_{j,n}^{\rm H(V)}=\bigl\langle{e^{-I}{I}^{n}/n!\ketbra{n}{n}_{\theta,\phi}}\bigr\rangle_{\Omega^{\rm H(V)}_{j}}\bigl/\bigl\langle{e^{-I}{I}^{n}/n!}\bigr\rangle_{\Omega^{\rm H(V)}_{j}}$, and their corresponding photon-number statistics are $p_{n|j}^{\rm test}=\left\langle{e^{-I}{I}^{n}/{n!}}\right\rangle_{\Omega^{\rm H}_{j}}/\langle{1}\rangle_{\Omega^{\rm H}_{j}}=\left\langle{e^{-I}{I}^{n}/{n!}}\right\rangle_{\Omega^{\rm V}_{j}}/\langle{1}\rangle_{\Omega^{\rm V}_{j}}$.

Let us now focus on the single-photon components of $\sigma_{j}^{\rm R}$ and $\sigma_{j}^{\rm L}$, from which the final key is ultimately extracted. Using the matrix representation of Appendix~\ref{representation}, it follows that
\begin{equation}\begin{split}\label{depolarizing}
&\sigma_{j,1}^{\rm R}=\frac{1+\lambda_{j}^{\rm key}}{2}\ketbra{\rm R}{\rm R}+\frac{1-\lambda_{j}^{\rm key}}{2}\ketbra{\rm L}{\rm L},\\
&\sigma_{j,1}^{\rm L}=\frac{1+\lambda_{j}^{\rm key}}{2}\ketbra{\rm L}{\rm L}+\frac{1-\lambda_{j}^{\rm key}}{2}\ketbra{\rm R}{\rm R},
\end{split}\end{equation}
for $\lambda_{j}^{\rm key}=\left\langle{e^{-I}{I}\cos{\theta}}\right\rangle_{\Omega^{\rm R}_{j}}\bigl/\left\langle{e^{-I}{I}}\right\rangle_{\Omega^{\rm R}_{j}}$, meaning that the desired polarization is prepared with probability $(1+\lambda_{j}^{\rm key})/2$, and the orthogonal polarization is prepared otherwise. Without loss of generality then, one can describe these states by considering a shield system $S$ that stores the bit flip information. This leads to the purified states
\begin{equation}\begin{split}\label{shield}
&\ket{\Psi_{j,1}^{\rm R}}_{\rm SB}=\sqrt{\frac{1+\lambda_{j}^{\rm key}}{2}}\ket{0}_{\rm S}\ket{\rm R}_{\rm B}+\sqrt{\frac{1-\lambda_{j}^{\rm key}}{2}}\ket{1}_{\rm S}\ket{\rm L}_{\rm B},\\
&\ket{\Psi_{j,1}^{\rm L}}_{\rm SB}=\sqrt{\frac{1+\lambda_{j}^{\rm key}}{2}}\ket{0}_{\rm S}\ket{\rm L}_{\rm B}+\sqrt{\frac{1-\lambda_{j}^{\rm key}}{2}}\ket{1}_{\rm S}\ket{\rm R}_{\rm B},
\end{split}\end{equation}
where we have added the subscripts S and B for clarity (nonetheless, they shall be omitted where possible). Further proceeding with the standard entanglement-based source-replacement scheme at Alice's side, with $\{\ket{\rm R}_{\rm A},\ket{\rm L}_{\rm A}\}$ denoting an orthonormal basis of Alice's ancillary qubit $A$, we obtain the state
\begin{equation}\label{EB_key}
\ket{\Psi_{j,1}}_{\rm ASB}=\frac{1}{\sqrt{2}}\left(\ket{\rm R}_{\rm A}\ket{\Psi_{j,1}^{\rm R}}_{\rm SB}+\ket{\rm L}_{\rm A}\ket{\Psi_{j,1}^{\rm L}}_{\rm SB}\right),
\end{equation}
which equivalently describes the state preparation every time a single-photon is generated and $\Omega_{j}^{\rm key}$ is post-selected. Importantly, by introducing $\ket{\rm H}_{\rm A}=(\ket{\rm R}_{\rm A}+\ket{\rm L}_{\rm A})/\sqrt{2}$ and $\ket{\rm V}_{\rm A}=(\ket{\rm R}_{\rm A}-\ket{\rm L}_{\rm A})/\sqrt{2}$ and regrouping terms, $\ket{\Psi_{j,1}}_{\rm ASB}$ can be written as
\begin{equation}
\ket{\Psi_{j,1}}_{\rm ASB}=\frac{1}{\sqrt{2}}\left(\ket{\rm H}_{\rm A}\ket{\beta_{j}^{+}}_{\rm S}\ket{\rm H}_{\rm B}+\ket{\rm V}_{\rm A}\ket{\beta_{j}^{-}}_{\rm S}\ket{\rm V}_{\rm B}\right),
\end{equation}
where the shield states are given by $\ket{\beta_{j}^{\pm}}_{\rm S}=[(1+\lambda_{j}^{\rm key})/2]^{1/2}\ket{0}_{\rm S}\pm\bigl[(1-\lambda_{j}^{\rm key})/2\bigr]^{1/2}\ket{1}_{\rm S}$.

In summary, if Alice measures her ancilla A of $\ket{\Psi_{j,1}}_{\rm ASB}$ in the key basis $\{\ket{\rm R}_{\rm A},\ket{\rm L}_{\rm A}\}$, she prepares $\sigma_{j,1}^{\rm R}$ or $\sigma_{j,1}^{\rm L}$ at random, while if she measures it in the test basis $\{\ket{\rm H}_{\rm A},\ket{\rm V}_{\rm A}\}$, she prepares $\ket{\rm H}_{\rm B}$ or $\ket{\rm V}_{\rm B}$ at random irrespectively of $j\in\Gamma^{\rm key}$. As a consequence, for all $j\in\Gamma^{\rm key}$, the phase-error rate (PHER) of $\sigma_{j,1}^{\rm R}$ and $\sigma_{j,1}^{\rm L}$ could be estimated from the bit-error rate (BER) of the pure states $\ket{\rm H}_{\rm B}$ and $\ket{\rm V}_{\rm B}$ (the reader is referred to Appendix~\ref{security_claims} for a formal definition of the PHER).
Indeed, by post-selecting the test regions $\Omega_{j}^{\rm H}$ and $\Omega_{j}^{\rm V}$ of Eq.~(\ref{test_regions}), such perfect states are actually prepared at a certain rate, because the single-photon components of $\sigma_{j}^{\rm H}$ and $\sigma_{j}^{\rm V}$ verify
\begin{equation}\begin{split}\label{depolarizing_test}
&\sigma_{j,1}^{\rm H}=\lambda_{j}^{\rm test}\ketbra{\rm H}{\rm H}+(1-\lambda_{j}^{\rm test})\frac{\mathds{1}}{2},\\
&\sigma_{j,1}^{\rm V}=\lambda_{j}^{\rm test}\ketbra{\rm V}{\rm V}+(1-\lambda_{j}^{\rm test})\frac{\mathds{1}}{2},
\end{split}\end{equation}
for $\lambda_{j}^{\rm test}=\langle{e^{-I}{I}\sin{\theta}\cos{\phi}}\rangle_{\Omega^{\rm H}_{j}}\bigl/\langle{e^{-I}{I}}\rangle_{\Omega^{\rm H}_{j}}$ and all $j\in\Gamma^{\rm test}$. In other words, the post-selection of $\sigma_{j,1}^{\rm H}$ ($\sigma_{j,1}^{\rm V}$) determines the preparation of the ideal state $\ket{\rm H}$ ($\ket{\rm V}$) with probability $\lambda_{j}^{\rm test}$, and the fully mixed state with probability $1-\lambda_{j}^{\rm test}$. Exploiting this feature, in Section~\ref{decoy} we provide a decoy-state method that targets the BER of the ideal components $\ket{\rm H}_{\rm B}$ and $\ket{\rm V}_{\rm B}$ of $\sigma_{j,1}^{\rm H}$ and $\sigma_{j,1}^{\rm V}$ directly, rather than the mixed-state BER contemplated in the asymptotic analyses~\cite{Mike,Zapatero}.

\section{Protocol description}\label{Protocol_description}
For ease of understanding, we provide a description of a fully passive decoy-state BB84 protocol before explaining our PE method.

\begin{framed}
\hspace{.5cm}For a pre-agreed number of rounds $N$, the parties do the following.

\begin{enumerate}[label=(\roman*)]
\item{\it State preparation.} Alice operates the passive source and verifies if the delivered state matches any acceptance region.
\item{\it Measurement.} Bob measures the incoming signal in the key (test) basis with probability $q_{\mathds{K}}$ ($q_{\mathds{T}}=1-q_{\mathds{K}}$).\\

After the quantum communication phase, the public discussion and the classical post-processing run as follows.
  
\item{\it Public discussion.} Alice communicates Bob the set of rounds where $\sigma_{j}^{\rm key}$ ($\sigma_{j}^{\rm test}$) was post-selected, for all $j\in\Gamma^{\rm key}$ ($j\in\Gamma^{\rm test}$). In return, Bob reveals Alice the subsets of these sets where a basis match occurred and a detection event was recorded, say $\{\mathcal{X}_{j}^{\rm key}\}_{j\in\Gamma^{\rm key}}$ and $\{\mathcal{X}_{j}^{\rm test}\}_{j\in\Gamma^{\rm test}}$. $\mathcal{X}^{\rm key}=\bigcup_{j\in\Gamma^{\rm key}}\mathcal{X}_{j}^{\rm key}$ ($\mathcal{X}^{\rm test}=\bigcup_{j\in\Gamma^{\rm test}}\mathcal{X}_{j}^{\rm test}$) identifies the sifted-key data (test-basis PE data). We refer to $M_{j}^{\rm key}=\lvert{\mathcal{X}_{j}^{\rm key}}\rvert$ and $M_{j}^{\rm test}=\lvert{\mathcal{X}_{j}^{\rm test}}\rvert$ as the ``numbers of observed counts". Lastly, Alice discloses her bit strings for the test sets $\mathcal{X}_{j}^{\rm test}$.
\item{\it Parameter estimation.} Comparing Alice's test bit strings with his own, Bob computes the corresponding ``numbers of error counts", say $\{m_{j}^{\mathrm{test}}\}_{j\in\Gamma^{\rm test}}$. From his available data, Bob estimates a lower bound $M^{\rm L}_{\rm key,1}$ on the number of single-photon counts in $\mathcal{X}^{\rm key}$, $M_{\rm key,1}$, and an upper bound $e^{\rm (ph)\hspace{.05cm}\rm U}_{1}$ on the single-photon PHER in this set, $e^{\rm (ph)}_{1}$. Bob uses the above bounds to compute the secret key length, $l$, for a pre-fixed error correction (EC) leakage, $\lambda_{\rm EC}$, and a pre-fixed error verification (EV) tag size, $\lceil{\log(1/\epsilon_{\rm cor})}\rceil$. If $l=0$, Bob aborts the protocol.
\item{\it Error correction and error verification.}  Upon success of the PE step, the parties run the pre-agreed EC and EV steps, the latter being based on two-universal hashing~\cite{Wegman}. If EV fails, Bob aborts the protocol.
\item{\it Privacy amplification.}  Upon success of the EV step, the parties apply privacy amplification (PA) based on two-universal hashing~\cite{Wegman,LHL} to their reconciled sifted keys, obtaining final keys of length $l$.
\end{enumerate}
\end{framed}
The reader is referred to Appendix~\ref{security_claims} for a derivation of the secret key length $l$.
\section{Decoy-state method}\label{decoy}
Here, we generalize the decoy-state method of~\cite{Zapatero} to the finite-key regime, incorporating a novelty that surpasses a limitation of both~\cite{Mike,Zapatero}.

We recall that, following the adversary model of Section~\ref{Assumptions}, the state of Eve's register at the end of round $u-1$, $R_{u-1}$, fully determines Eve's intervention in round $u$. This said, the notation goes as follows. For $u=1,2\ldots{}N$, $Q^{\beta\hspace{.05cm}(u)}_{j}$ ($E^{\beta\hspace{.05cm}(u)}_{j}$) stands for the probability that a ``click" (an ``error") is recorded in round $u$, conditioned on $R_{u-1}$ and on the event that $\sigma_{j}^{\beta}$ is post-selected and Bob performs a $\beta$-basis measurement, $\beta\in\{\mathrm{key,test}\}$. That is to say, $Q^{\beta\hspace{.05cm}(u)}_{j}=p^{(u)}\bigl(\mathrm{click}|R_{u-1},\sigma_{j}^{\beta},\beta\bigr)$ and $E^{\beta\hspace{.05cm}(u)}_{j}=p^{(u)}\bigl(\mathrm{error}|R_{u-1},\sigma_{j}^{\beta},\beta\bigr)$. Similarly, $y^{\beta\hspace{.05cm}(u)}_{j,n}$ and $e^{\beta\hspace{.05cm}(u)}_{j,n}$ denote the corresponding $n$-photon yield and $n$-photon error probability, $y^{\beta\hspace{.05cm}(u)}_{j,n}=p^{(u)}\bigl(\mathrm{click}|R_{u-1},\sigma^{\beta}_{j,n},\beta\bigr)$ and $e^{\beta\hspace{.05cm}(u)}_{j,n}=p^{(u)}\bigl(\mathrm{error}|R_{u-1},\sigma^{\beta}_{j,n},\beta\bigr)$. Note that ``error" refers here to the joint event where a click is recorded and it triggers a bit error.
\subsection{Decoy constraints}
From the photon-number decomposition of $\sigma^{\beta}_{j}$, it follows that $Q^{\beta\hspace{.05cm}(u)}_{j}=\sum_{n=0}^{\infty}p^{\beta}_{n|j}y^{\beta\hspace{.05cm}(u)}_{j,n}$ and $E^{\beta\hspace{.05cm}(u)}_{j}=\sum_{n=0}^{\infty}p^{\beta}_{n|j}e^{\beta\hspace{.05cm}(u)}_{j,n}$ for all $u=1,2\ldots{}N$, $j\in\Gamma^{\beta}$ and $\beta\in\{\mathrm{key,test}\}$. If we now define the averaged quantities $Q^{\beta}_{j}=\sum_{u}Q^{\beta\hspace{.05cm}(u)}_{j}/N$, $E^{\beta}_{j}=\sum_{u}E^{\beta\hspace{.05cm}(u)}_{j}/N$, $y^{\beta}_{j,n}=\sum_{u}y^{\beta\hspace{.05cm}(u)}_{j,n}/N$ and $e^{\beta}_{j,n}=\sum_{u}e^{\beta\hspace{.05cm}(u)}_{j,n}/N$, and truncate the index $n$ to a threshold photon number $n_{\rm cut}$, we derive the sets of constraints
\begin{equation}\label{decoy_constraints}
\begin{cases}
Q^{\beta}_{j}\geq\displaystyle\sum_{n=0}^{n_{\rm cut}}p^{\beta}_{n|j}y^{\beta}_{j,n}, \\
Q^{\beta}_{j}\leq\displaystyle\sum_{n=0}^{n_{\rm cut}}p^{\beta}_{n|j}y^{\beta}_{j,n}+1-\displaystyle\sum_{n=0}^{n_{\rm cut}}p^{\beta}_{n|j},  
\end{cases}
\end{equation}
and
\begin{equation}\label{decoy_constraints_2}
\begin{cases}
E^{\beta}_{j}\geq\displaystyle\sum_{n=0}^{n_{\rm cut}}p^{\beta}_{n|j}e^{\beta}_{j,n}, \\
E^{\beta}_{j}\leq\displaystyle\sum_{n=0}^{n_{\rm cut}}p^{\beta}_{n|j}e^{\beta}_{j,n}+1-\displaystyle\sum_{n=0}^{n_{\rm cut}}p^{\beta}_{n|j},
\end{cases}
\end{equation}
to be addressed numerically later on.
\subsection{Trace-distance constraints}
Coming next, we reproduce the TD constraints provided in~\cite{Zapatero}, to be combined with the decoy constraints above. For each basis $\beta$ and photon number $n$, these extra constraints restrict the differences $\abs*{y^{\beta}_{j,n}-y^{\beta}_{k,n}}$ and $\abs*{e^{\beta}_{j,n}-e^{\beta}_{k,n}}$ for every pair of distinct settings $(j,k)$.

\subsubsection{Yields}
Regarding the yields, Bob's possible measurement outcomes are ``click" and ``no click", such that his measurement can be described by a positive-operator-valued measure (POVM) with elements $\{\hat{M}^{\mathrm{click}}_{\rm B},\hat{M}^{\mathrm{no}\hspace{.05cm}\mathrm{click}}_{\rm B}\}$, where $\hat{M}^{\mathrm{no}\hspace{.05cm}\mathrm{click}}_{\rm B}=\mathds{1}_{\rm B}-\hat{M}^{\mathrm{click}}_{\rm B}$ (note that this POVM is basis-independent due to the basis-independent detection efficiency assumption). Therefore, from the adversary model described in Section~\ref{Assumptions}, we have that
\begin{equation}
y^{\beta\hspace{.05cm}(u)}_{j,n}=\Tr\left[\hat{D}^{\mathrm{click}\hspace{.05cm}(u)}_{\rm BE}\ \sigma_{j,n}^{\beta}\otimes\xi_{\rm E}^{(u)}\right]
\end{equation}
for $\hat{D}^{\mathrm{click}\hspace{.05cm}(u)}_{\rm BE}=\hat{U}^{(u)\dagger}_{\rm BE}\hat{M}^{\mathrm{click}}_{\rm B}\hat{U}^{(u)}_{\rm BE}$, where $\xi_{\rm E}^{(u)}$ and $\hat{U}^{(u)}_{\rm BE}$ respectively denote the state of Eve's quantum probe and Eve's unitary operation in round $u$.

Now, in virtue of the TD argument presented in Appendix~\ref{TD_argument}, it follows that
\begin{equation}\label{TD_yield}
\abs{y^{\beta\hspace{.05cm}(u)}_{j,n}-y^{\beta\hspace{.05cm}(u)}_{k,n}}\leq{}D\left(\sigma^{\beta}_{j,n},\sigma^{\beta}_{k,n}\right),
\end{equation}
for all possible inputs $\beta$, $j$, $k$, $n$ and $u$, where $D(\rho,\tau)=\frac{1}{2}\Tr[\sqrt{(\rho-\tau)^{2}}]$ denotes the trace distance between $\rho$ and $\tau$. In particular, we remark that $\sigma^{\rm key}_{j,1}=\mathds{1}/2$ ($\sigma^{\rm key}_{j,0}=\ketbra{\rm vac}{\rm vac}$) for all $j\in\Gamma^{\rm key}$ and $\sigma^{\rm test}_{j,1}=\mathds{1}/2$ ($\sigma^{\rm test}_{j,0}=\ketbra{\rm vac}{\rm vac}$) for all $j\in\Gamma^{\rm test}$ ---see Eq.~(\ref{depolarizing}) and Eq.~(\ref{depolarizing_test})---. As a consequence, the single-photon yield and the vacuum yield are independent of both the intensity setting and the basis.
\subsubsection{Test-basis error probabilities}
In order to describe the test-basis error probabilities, finer-grained measurement operators are required, in a one-to-one correspondence with Bob's possible outcomes: ``H", ``V" and ``no click". As usual, double clicks are randomly assigned to either ``H" or ``V", and this assignment is straightforwardly incorporated in the POVM. In short, error-wise, Bob's measurement is described by a POVM with elements $\{\hat{M}^{\rm H}_{\rm B},\hat{M}^{\rm V}_{\rm B},\hat{M}^{\mathrm{no}\hspace{.05cm}\mathrm{click}}_{\rm B}\}$, where $\hat{M}^{\rm H}_{\rm B}+\hat{M}^{\rm V}_{\rm B}=\hat{M}^{\mathrm{click}}_{\rm B}$. Moreover, since $\sigma^{\rm test}_{j,n}=\bigl(\sigma_{j,n}^{\rm H}+\sigma_{j,n}^{\rm V}\bigr)/2$ it follows that
\begin{equation}\begin{split}\label{error_yield_def}
&e_{j,n}^{\mathrm{test}\hspace{.05cm}(u)}=\\
&\frac{1}{2}\bigl[p^{(u)}\left(\mathrm{V}|R_{u-1},\sigma_{j,n}^{\rm H},\mathrm{test}\right)+p^{(u)}\left(\mathrm{H}|R_{u-1},\sigma_{j,n}^{\rm V},\mathrm{test}\right)\bigr]=\\
&\frac{1}{2}\biggl\{\Tr\left[\hat{D}^{\mathrm{V}\hspace{.05cm}(u)}_{\rm BE}\ \sigma_{j,n}^{\rm H}\otimes\xi_{\rm E}^{(u)}\right]+\Tr\left[\hat{D}^{\mathrm{H}\hspace{.05cm}(u)}_{\rm BE}\ \sigma_{j,n}^{\rm V}\otimes\xi_{\rm E}^{(u)}\right]\biggr\},
\end{split}\end{equation}
for $\hat{D}^{y\hspace{.05cm}(u)}=\hat{U}^{(u)\dagger}_{\rm BE}\hat{M}^{y}_{\rm B}\hat{U}^{(u)}_{\rm BE}$ with $y\in\{\rm H,\rm V\}$. Coming next, combining the TD argument with the triangle inequality, one can readily show that
\begin{equation}
\left\lvert{e_{j,n}^{\mathrm{test}\hspace{.05cm}(u)}-e_{k,n}^{\mathrm{test}\hspace{.05cm}(u)}}\right\rvert\leq{}\frac{1}{2}\left[D\left(\sigma_{j,n}^{\rm H},\sigma_{k,n}^{\rm H}\right)+D\left(\sigma_{j,n}^{\rm V},\sigma_{k,n}^{\rm V}\right)\right].
\end{equation}
Moreover, because of the azimuthal symmetry of the output state of the source, $D\bigl(\sigma_{j,n}^{\rm H},\sigma_{k,n}^{\rm H}\bigr)=D\bigl(\sigma_{j,n}^{\rm V},\sigma_{k,n}^{\rm V}\bigr)$, and thus
\begin{equation}\label{TD_error}
\left\lvert{e_{j,n}^{\mathrm{test}\hspace{.05cm}(u)}-e_{k,n}^{\mathrm{test}\hspace{.05cm}(u)}}\right\rvert\leq{}D\left(\sigma_{j,n}^{\rm H},\sigma_{k,n}^{\rm H}\right).
\end{equation}
In particular, $e_{j,0}^{\mathrm{test}\hspace{.05cm}(u)}=e_{k,0}^{\mathrm{test}\hspace{.05cm}(u)}$ for all $j$ and $k$ but $e_{j,1}^{\mathrm{test}\hspace{.05cm}(u)}\neq{}e_{k,1}^{\mathrm{test}\hspace{.05cm}(u)}$ because $\sigma_{j,1}^{\rm H}\neq{}\sigma_{k,1}^{\rm H}$ if $j\neq{}k$.

Since the TD bounds in Eq.~(\ref{TD_yield}) and Eq.~(\ref{TD_error}) are round-independent, the exact same bounds hold after averaging over all protocol rounds (this is an immediate consequence of the triangle inequality too). That is to say, $\abs*{y^{\beta}_{j,n}-y^{\beta}_{k,n}}\leq{}D\bigl(\sigma^{\beta}_{j,n},\sigma^{\beta}_{k,n}\bigr)$ for $\beta\in\{\mathrm{key,test}\}$, and $\abs*{e^{\rm test}_{j,n}-e^{\rm test}_{k,n}}\leq{}D\bigl(\sigma_{j,n}^{\rm H},\sigma_{k,n}^{\rm H}\bigr)$ for all possible $j,k$ and $n$. In order to explicitly calculate these TD values, we use the matrix representation described in Appendix~\ref{representation}.

\subsection{Noise-suppressing constraints}\label{noise_suppression}
As shown in Sec.~\ref{post-selection}, the BER of the ideal states $\ket{\rm H}$ and $\ket{\rm V}$ is the relevant quantity to estimate the PHER of the key-generating states $\sigma^{\rm R}_{j,1}$ and $\sigma^{\rm L}_{j,1}$, and one can target this quantity directly in the LP by harnessing Eq.~(\ref{depolarizing_test}). This is what we do next.

By plugging Eq.~(\ref{depolarizing_test}) into Eq.~(\ref{error_yield_def}) and using (i) $\hat{D}^{\mathrm{H}\hspace{.05cm}(u)}_{\rm BE}+\hat{D}^{\mathrm{V}\hspace{.05cm}(u)}_{\rm BE}=\hat{D}^{\mathrm{click}\hspace{.05cm}(u)}_{\rm BE}$ and (ii) $\mathds{1}/{2}=\sigma_{j,1}^{\rm test}$, one obtains
\begin{equation}\begin{split}\label{BER_decomposition}
&e_{j,1}^{\mathrm{test}\hspace{.05cm}(u)}=\\
&\lambda_{j}^{\rm test}\frac{\Tr\left[\hat{D}^{\mathrm{V}\hspace{.05cm}(u)}_{\rm BE}\hspace{.025cm}\ketbra{\rm H}{\rm H}\otimes\xi_{\rm E}^{(u)}\right]+\Tr\left[\hat{D}^{\mathrm{H}\hspace{.05cm}(u)}_{\rm BE}\hspace{.025cm}\ketbra{\rm V}{\rm V}\otimes\xi_{\rm E}^{(u)}\right]}{2}\\
&+(1-\lambda_{j}^{\rm test})\frac{\Tr\left[\hat{D}^{\mathrm{V}\hspace{.05cm}(u)}_{\rm BE}\hspace{.025cm}\displaystyle{\frac{\mathds{1}}{2}}\otimes\xi_{\rm E}^{(u)}\right]+\Tr\left[\hat{D}^{\mathrm{H}\hspace{.05cm}(u)}_{\rm BE}\hspace{.025cm}\displaystyle{\frac{\mathds{1}}{2}}\otimes\xi_{\rm E}^{(u)}\right]}{2}=\\
&\lambda_{j}^{\rm test}\frac{p^{(u)}\left(\mathrm{V}|R_{u-1},\ketbra{\rm H}{\rm H},\mathrm{test}\right)+p^{(u)}\left(\mathrm{H}|R_{u-1},\ketbra{\rm V}{\rm V},\mathrm{test}\right)}{2}\\
&+(1-\lambda_{j}^{\rm test})\frac{p^{(u)}\left(\mathrm{click}|R_{u-1},\sigma^{\rm test}_{j,1},\mathrm{test}\right)}{2}.
\end{split}\end{equation}
By definition, the first term is the contribution of the ideal states $\ket{\rm H}$ and $\ket{\rm V}$ to the conditional bit-error probability in round $u$, say $e_{1}^{\mathrm{ideal}\hspace{.05cm}(u)}$, and the second term is the white-noise contribution from the fully mixed state. Noticing that $p^{(u)}\left(\mathrm{click}|R_{u-1},\sigma^{\rm test}_{j,1},\mathrm{test}\right)=y^{\mathrm{test}\hspace{.05cm}(u)}_{j,1}$, Eq.~(\ref{BER_decomposition}) reads $e_{j,1}^{\mathrm{test}\hspace{.05cm}(u)}=\lambda_{j}^{\rm test}e_{1}^{\mathrm{ideal}\hspace{.05cm}(u)}+(1-\lambda_{j}^{\rm test})y^{\mathrm{test}\hspace{.05cm}(u)}_{j,1}/2$, which leads to
\begin{equation}
e_{j,1}^{\mathrm{test}}=\lambda_{j}^{\rm test}e_{1}^{\mathrm{ideal}}+(1-\lambda_{j}^{\rm test})\frac{y^{\mathrm{test}}_{j,1}}{2}
\end{equation}
for the averaged quantities $e^{\rm test}_{j,1}=\sum_{u}e_{j,1}^{\mathrm{test}\hspace{.05cm}(u)}/N$, $e^{\rm ideal}_{1}=\sum_{u}e_{1}^{\mathrm{ideal}\hspace{.05cm}(u)}/N$ and $y^{\rm test}_{j,1}=\sum_{u}y_{j,1}^{\mathrm{test}\hspace{.05cm}(u)}/N$. We remark that the white-noise component $y^{\mathrm{test}}_{j,1}/2$ is setting-independent (just like the ideal component $e_{1}^{\rm ideal}$) due to the setting-independence of $y^{\mathrm{test}}_{j,1}$.
\subsection{Linear programs}\label{LPs}
Here, we select an arbitrary reference setting $\alpha\in\Gamma^{\rm key}$, such that $y_{\alpha,1}^{\rm key}$ becomes the target of the yield-related linear program (LP). Of course, in virtue of the setting-independence and the basis-independence of the single-photon yield, one could select a different setting from $\Gamma^{\rm key}$ or even from $\Gamma^{\rm test}$ for this purpose.

Combining the three types of constraints we have presented, one reaches final LPs for the estimation of the single-photon parameters $y^{\mathrm{key}}_{\alpha,1}$ and $e_{1}^{\mathrm{ideal}}$ given $\{Q_{j}^{\mathrm{key}}\}_{j\in\Gamma^{\rm key}}$ and $\{E_{j}^{\rm test}\}_{j\in\Gamma^{\rm test}}$. Note, however, that the latter cannot be determined with certainty in the protocol, but only estimated from the observed numbers of key-basis measure counts, $\{M_{j}^{\rm key}\}_{j\in\Gamma^{\rm key}}$, and test-basis error counts, $\{m_{j}^{\mathrm{test}}\}_{j\in\Gamma^{\rm test}}$, using concentration inequalities for sums of dependent random variables (RVs). In particular, we use Kato's inequality~\cite{Kato} for this purpose, which allows to establish that
\begin{equation}\begin{split}\label{Kato_LP}
&N{}q_{\mathds{K}}\left\langle{1}\right\rangle_{\Omega_{j}^{\rm key}}Q_{j}^{\mathrm{key}}\overset{2\epsilon}{\in}\biggl(K^{\rm L}_{N,\epsilon}(M_{j}^{\rm key}),K^{\rm U}_{N,\epsilon}(M_{j}^{\rm key})\biggr),\\
&N{}q_{\mathds{T}}\left\langle{1}\right\rangle_{\Omega_{j}^{\mathrm{test}}}E_{j}^{\mathrm{test}}\overset{2\epsilon}{\in}\biggl(K^{\rm L}_{N,\epsilon}(m_{j}^{\mathrm{test}}),K^{\rm U}_{N,\epsilon}(m_{j}^{\mathrm{test}})\biggr),
\end{split}\end{equation}
for known functions $K^{\rm L}_{N,\epsilon}(x)$ and $K^{\rm U}_{N,\epsilon}(x)$ given in Appendix~\ref{Kato}, where the superscripts $2\epsilon$ over the ``$\in$" symbols indicate that the corresponding intervals hold except with probability $2\epsilon$ at most~\cite{comment}. To be precise, in both intervals of Eq.~(\ref{Kato_LP}), each one-sided bound holds except with probability $\epsilon$ at most, which translates into an overall error probability of $2\epsilon$ in virtue of the union bound~\cite{union}. As an example, the reader is referred to Appendix~\ref{example} for the explicit derivation of one of the bounds in Eq.~(\ref{Kato_LP}).

In conclusion, we can take $\{Q_{j}^{\rm key}\}_{j\in\Gamma^{\rm key}}$ and $\{E_{j}^{\rm test}\}_{j\in\Gamma^{\rm test}}$ to be additional variables of the LPs, as long as we further incorporate the constraints set by Eq.~(\ref{Kato_LP})~\cite{detail}. Putting it all together, if one denotes $|\Gamma^{\rm key}|=d_{\rm key}$ and $|\Gamma^{\rm test}|=d_{\rm test}$, it follows that
\begin{equation}
y_{\alpha,1}^{\rm key}\overset{2\epsilon{}d_{\rm key}}{>}y_{1}^{\rm L}\hspace{.2cm}\mathrm{and}\hspace{.2cm}e_{1}^{\mathrm{ideal}}\overset{2\epsilon(d_{\rm key}+d_{\rm test})}{<}e_{1}^{\mathrm{ideal}\hspace{.05cm}\mathrm{U}}
\end{equation}
(the superscripts upper-bounding the error probabilities again), where $y_{1}^{\rm L}$ is the solution to
\begin{equation}\begin{split}\label{lp_1}
&\min\quad y_{\alpha,1}^{\rm key}\hspace{.4cm}\textup{s.t.}\\
&\sum_{n=0}^{n_{\rm cut}}p_{n|j}^{\rm key}y_{j,n}^{\rm key}\leq{}Q_{j}^{\rm key},\hspace{.1cm}j\in\Gamma^{\rm key},\\
&Q_{j}^{\rm key}\leq\sum_{n=0}^{n_{\rm cut}}p_{n|j}^{\rm key}y_{j,n}^{\rm key}+1-\sum_{n=0}^{n_{\rm cut}}p_{n|j}^{\rm key},\hspace{.1cm}j\in\Gamma^{\rm key},\\
&\left\lvert{y^{\rm key}_{j,n}-y^{\rm key}_{k,n}}\right\rvert\leq{}D\bigl(\sigma^{\rm key}_{j,n},\sigma^{\rm key}_{k,n}\bigr),\hspace{.1cm}j,k\in\Gamma^{\rm key},\ n=2\ldots{}n_{\rm cut},\\
&y^{\rm key}_{j,n}=y^{\rm key}_{k,n},\hspace{.1cm}j,k\in\Gamma^{\rm key},\ n=0,1,\\
&0\leq{}y^{\rm key}_{j,n}\leq{}1,\hspace{.1cm}j\in\Gamma^{\rm key},\ n=0\ldots{}n_{\rm cut},\\
&\frac{K^{\rm L}_{N,\epsilon}(M_{j}^{\rm key})}{N{}q_{\mathds{K}}\left\langle{1}\right\rangle_{\Omega_{j}^{\rm key}}}\leq{Q_{j}^{\rm key}}\leq{}\frac{K^{\rm U}_{N,\epsilon}(M_{j}^{\rm key})}{N{}q_{\mathds{K}}\left\langle{1}\right\rangle_{\Omega_{j}^{\rm key}}},\hspace{.1cm}j\in\Gamma^{\rm key},
\end{split}\end{equation}
and $e_{1}^{\mathrm{ideal}\hspace{.05cm}\mathrm{U}}$ is the solution to
\begin{equation}\begin{split}\label{lp_2}
&\max\quad e_{1}^{\mathrm{ideal}}\hspace{.4cm}\textup{s.t.}\\
&\sum_{n=0}^{n_{\rm cut}}p_{n|j}^{\rm test}e_{j,n}^{\rm test}\leq{}E_{j}^{\rm test},\hspace{.1cm}j\in\Gamma^{\rm test},\\
&E_{j}^{\rm test}\leq\sum_{n=0}^{n_{\rm cut}}p_{n|j}^{\rm test}e_{j,n}^{\rm test}+1-\sum_{n=0}^{n_{\rm cut}}p_{n|j}^{\rm test},\hspace{.1cm}j\in\Gamma^{\rm test},\\
&\left\lvert{e_{j,n}^{\rm test}-e_{k,n}^{\rm test}}\right\rvert\leq{}D\bigl(\sigma_{j,n}^{\rm H},\sigma_{k,n}^{\rm H}\bigr),\hspace{.1cm}j,k\in\Gamma^{\rm test},\ n=1\ldots{}n_{\rm cut},\\
&e^{\rm test}_{j,0}=e^{\rm test}_{k,0},\hspace{.1cm}j,k\in\Gamma^{\rm test},\\
&\lambda_{j}^{\rm test}e_{1}^{\mathrm{ideal}}\leq{}e_{j,1}^{\rm test}-(1-\lambda_{j}^{\rm test})\frac{y_{1}^{\rm L}}{2},\hspace{.1cm}j\in\Gamma^{\rm test},\\
&0\leq{}e_{1}^{\rm ideal}\leq{}1,\hspace{.1cm}0\leq{}e_{j,n}^{\rm test}\leq{}1,\hspace{.1cm}j\in\Gamma^{\rm test},\ n=0\ldots{}n_{\rm cut},\\
&\frac{K^{\rm L}_{N,\epsilon}(m_{j}^{\mathrm{test}})}{N{}q_{\mathds{T}}\left\langle{1}\right\rangle_{\Omega_{j}^{\rm test}}}\leq{E_{j}^{\rm test}}\leq{}\frac{K^{\rm U}_{N,\epsilon}(m_{j}^{\mathrm{test}})}{N{}q_{\mathds{T}}\left\langle{1}\right\rangle_{\Omega_{j}^{\rm test}}},\hspace{.1cm}j\in\Gamma^{\rm test}.
\end{split}\end{equation}
Note that the error probability of the first LP is bounded by $2\epsilon{}d_{\rm key}$ because it relies on $2\epsilon{}d_{\rm key}$ usages of Kato's inequality for the measure counts $\{M_{j}^{\rm key}\}_{j\in\Gamma^{\rm key}}$, each of them having a fixed error probability $\epsilon$. On the contrary, the second LP has an error probability of $2\epsilon(d_{\rm key}+d_{\rm test})$ because, besides relying on $2\epsilon{}d_{\rm test}$ usages of Kato's inequality for the error counts $\{m_{j}^{\mathrm{test}}\}_{j\in\Gamma^{\rm test}}$, it further uses the fact that, for all $j\in\Gamma^{\rm test}$, $y_{j,1}^{\rm test}=y_{\alpha,1}^{\rm key}\overset{2\epsilon{}d_{\rm key}}{>}y_{1}^{\rm L}$.

Alternatively, running the first LP under a maximization condition instead provides an upper bound $y_{1}^{\rm U}$ on $y_{\alpha,1}^{\rm key}$, such that $y_{\alpha,1}^{\rm key}\overset{2\epsilon{}d_{\rm key}}{<}y_{1}^{\rm U}$.
\section{Secret key parameters}\label{secret_key_parameters}
In order to evaluate the secret key length formula, given by Eq.~(\ref{keylength}), one must estimate the secret key parameters $M_{\rm key,1}$ and $e^{\rm (ph)}_{1}$. This is what we do next, using the decoy-state bounds $y_{1}^{\rm L}$, $y_{1}^{\rm U}$ and $e_{1}^{\mathrm{ideal}\hspace{.05cm}\mathrm{U}}$.

For simplicity, a common error probability $\epsilon$ shall be assumed for every usage of a concentration inequality in what follows, matching the error probability presumed for the Kato bounds in the previous section.
\subsection{Bounds on the single-photon measure counts}\label{yield}
Let us introduce the notation $\Omega^{\beta}=\bigcup_{j\in\Gamma^{\beta}}{\Omega_{j}^{\beta}}$ for $\beta\in\{\rm key,test\}$. From the reverse Kato bounds of Appendix~\ref{Kato} and an argument formally identical to that of Appendix~\ref{example} (but applied to a different sequence of Bernoulli RVs), one can show that
\begin{equation}\begin{split}\label{Kato_reverse}
M_{\rm key,1}\overset{2\epsilon}\in\Biggl(&\bar{K}^{\rm L}_{N,\epsilon}\Bigl(N{}q_{\mathds{K}}\left\langle{e^{-I}{I}}\right\rangle_{\Omega^{\rm key}}y_{\alpha,1}^{\rm key}\Bigr),\\
&\bar{K}^{\rm U}_{N,\epsilon}\Bigl(N{}q_{\mathds{K}}\left\langle{e^{-I}{I}}\right\rangle_{\Omega^{\rm key}}y_{\alpha,1}^{\rm key}\Bigr)\Biggr),
\end{split}\end{equation}
for known functions $\bar{K}^{\rm L}_{N,\epsilon}(x)$ and $\bar{K}^{\rm U}_{N,\epsilon}(x)$ given in Appendix~\ref{Kato}. As in Eq.~(\ref{Kato_LP}), each bound in the interval above holds except with probability $\epsilon$ at most. Also, we remark that Eq.~(\ref{Kato_reverse}) explicitly uses the fact that $y_{j,1}^{\rm key}=y_{\alpha,1}^{\rm key}$ for all $j\in\Gamma^{\rm key}$. Now, using $y_{\alpha,1}^{\rm key}\overset{2\epsilon{}d_{\rm key}}{>}y_{1}^{\rm L}$ and $y_{\alpha,1}^{\rm key}\overset{2\epsilon{}d_{\rm key}}{<}y_{1}^{\rm U}$ in Eq.~(\ref{Kato_reverse}), it follows that
\begin{equation}\begin{split}\label{Kato_reverse_2}
&M_{\rm key,1}\overset{\epsilon{}(2d_{\rm key}+1)}>\bar{K}^{\rm L}_{N,\epsilon}\Bigl(N{}q_{\mathds{K}}\left\langle{e^{-I}{I}}\right\rangle_{\Omega^{\rm key}}y_{1}^{\rm L}\Bigr)=:M_{\rm key,1}^{\rm L},\\
&M_{\rm key,1}\overset{\epsilon{}(2d_{\rm key}+1)}<\bar{K}^{\rm U}_{N,\epsilon}\Bigl(N{}q_{\mathds{K}}\left\langle{e^{-I}{I}}\right\rangle_{\Omega^{\rm key}}y_{1}^{\rm U}\Bigr)=:M_{\rm key,1}^{\rm U}.
\end{split}\end{equation}
In a similar fashion, one can derive a lower bound on the number of measure counts in $\mathcal{X}^{\rm test}$ triggered by the perfectly prepared test states $\ket{\rm H}$ and $\ket{\rm V}$, say $M_{\rm test,1}^{\rm ideal}$. This is so because $(\dyad{\rm H}{\rm H}+\dyad{\rm V}{\rm V})/2=\sigma_{\alpha,1}^{\rm key}=\mathds{1}/2$, and hence $y_{1}^{\rm L}$ provides a lower bound on the corresponding average yield too  ($y_{1}^{\rm L}$ lower bounds the round-averaged yield of any convex combination of single-photon states that adds up to $\mathds{1}/2$). Particularly, we have
\begin{equation}\begin{split}
&M_{\rm test,1}^{\rm ideal}\overset{\epsilon{}(2d_{\rm key}+1)}>\\
&\bar{K}^{\rm L}_{N,\epsilon}\Bigl(N{}q_{\mathds{T}}\left\langle{e^{-I}{I}}\right\rangle_{\Omega^{\rm test}}\lambda^{\rm test}y_{1}^{\rm L}\Bigr)=:M_{\rm test,1}^{\rm ideal\hspace{.05cm}L},
\end{split}\end{equation}
where we have introduced $\lambda^{\rm test}=\langle{e^{-I}{I}\sin{\theta}\cos{\phi}}\rangle_{\Omega^{\rm H}}\bigl/\langle{e^{-I}{I}}\rangle_{\Omega^{\rm H}}$ for $\Omega^{\rm H}=\bigcup_{j\in\Gamma^{\rm test}}{\Omega_{j}^{\rm H}}$.
\subsection{Upper bound on the phase-error rate}\label{error_yield}
Once again, from a reverse Kato bound of Appendix~\ref{Kato} and an argument formally identical to that of Appendix~\ref{example}, one can establish that the number $m_{\rm test,1}^{\rm ideal}$ of single-photon error counts in $\mathcal{X}^{\rm test}$ triggered by perfectly prepared states satisfies
\begin{equation}\begin{split}
&m_{\rm test,1}^{\rm ideal}\overset{\epsilon[2(d_{\rm key}+d_{\rm test})+1]}<\\
&\bar{K}^{\rm U}_{N,\epsilon}\Bigl(N{}q_{\mathds{T}}\left\langle{e^{-I}{I}}\right\rangle_{\Omega^{\rm test}}\lambda^{\rm test}e_{1}^{\mathrm{ideal}\hspace{.05cm}\mathrm{U}}\Bigr)=:m_{\rm test,1}^{\rm ideal\hspace{.05cm}U},
\end{split}\end{equation}
where the error term $\epsilon[2(d_{\rm key}+d_{\rm test})+1]$ arises from the usual composition of errors.

Let us now derive an upper bound on the single-photon PHER $e^{\rm (ph)}_{1}=m^{(\rm ph)}_{\rm key,1}/M_{\rm key,1}$, where $m^{(\rm ph)}_{\rm key,1}$ is the RV describing the number of single-photon phase errors in $\mathcal{X}^{\rm key}$. As found in Sec.~\ref{post-selection}, these errors would arise from uniformly preparing $\ket{\rm H}$ and $\ket{\rm V}$ and measuring them in the test basis, which means that the ratio $m_{\rm test,1}^{\rm ideal}/M_{\rm test,1}^{\rm ideal}$ is an unbiased estimator of $e^{\rm (ph)}_{1}$ in virtue of the standard random sampling argument invoked in the ideal BB84 protocol. To be precise, it follows from Serfling's inequality~\cite{Serfling} that
\begin{equation}\label{Serfling_errors}
m^{(\rm ph)}_{\rm key,1}\overset{\epsilon}{<}m_{\rm test,1}^{\rm ideal}\frac{M_{\rm key,1}}{M_{\rm test,1}^{\rm ideal}}+\Upsilon\left(M_{\rm key,1},M_{\rm test,1}^{\rm ideal},\epsilon\right),
\end{equation}
for $\Upsilon(x,y,z)=\sqrt{(x+y)x(y+1)\ln\left(z^{-1}\right)/2y^2}$. Now, making monotonicity considerations separately for each term in the r.h.s. of Eq.~(\ref{Serfling_errors}), it follows that
\begin{equation}\begin{split}
&m^{(\rm ph)}_{\rm key,1}\overset{\epsilon[2(d_{\rm key}+d_{\rm test})+4]}{<}\\
&m_{\rm test,1}^{\mathrm{ideal}\hspace{.05cm}\mathrm{U}}\frac{M_{\rm key,1}^{\rm U}}{M_{\rm test,1}^{\rm ideal\hspace{.05cm}L}}+\Upsilon\left(M_{\rm key,1}^{\rm U},M_{\rm test,1}^{\rm ideal\hspace{.05cm}L},\epsilon\right)=:m^{(\rm ph)\hspace{.05cm}U}_{\rm key,1},
\end{split}\end{equation}
where the usual composition of errors is applied as well. Consequently, the desired bound on the PHER is
\begin{equation}\label{final_bound}
e^{\rm (ph)}_{1}\overset{\epsilon[2(d_{\rm key}+d_{\rm test})+5]}{<}\frac{m^{(\rm ph)\hspace{.05cm}U}_{\rm key,1}}{M_{\rm key,1}^{\rm L}}=:e^{\rm (ph)\hspace{.05cm}\rm U}_{1},
\end{equation}
$M_{\rm key,1}^{\rm L}$ being given in Eq.~(\ref{Kato_reverse_2}). To sum up, the error bound in Eq.~(\ref{final_bound}), which matches the overall PE error, $\epsilon_{\rm PE}=\epsilon[2(d_{\rm key}+d_{\rm test})+5]$, includes $2(d_{\rm key}+d_{\rm test})$ direct Kato bounds for the gains $\{Q^{\rm key}_{j}\}_{j\in\Gamma^{\rm key}}$ and the error gains $\{E^{\rm test}_{j}\}_{j\in\Gamma^{\rm test}}$, four reverse Kato bounds for $M_{\rm key,1}^{\rm L}$, $M_{\rm key,1}^{\rm U}$, $M_{\rm test,1}^{\rm ideal\hspace{.05cm}L}$ and $m_{\rm test,1}^{\rm ideal\hspace{.05cm}U}$, and one Serfling bound for $e^{\rm (ph)\hspace{.05cm}\rm U}_{1}$.
\section{Performance}\label{performance}
In this section, we illustrate the rate-distance performance of the proposed passive protocol. The secret key rate is defined as $K=\max\{l,0\}/N$, where we recall that the key length $l$ is calculated according to Eq.~(\ref{keylength}).

In the absence of experimental data, we consider a standard channel and detector model given in Appendix~\ref{channel_model}, specified by the overall detection efficiency of the system, $\eta$, and the dark count probability of Bob's detectors, $p_{\rm d}$. The $\eta$ parameter factors as $\eta=\eta_{\rm Bob}\eta_{\rm ch}$, where $\eta_{\rm Bob}$ stands for the detection efficiency of Bob's detectors and $\eta_{\rm ch}$ stands for the channel transmittance, modeled as $\eta_{\rm ch}=10^{-\alpha_{\rm att}{}L/10}$ in terms of the attenuation coefficient of the channel, $\alpha_{\rm att}$, and the transmission length, $L$. For comparison purposes, we use the same values as in~\cite{Zapatero}: $\eta_{\rm det}=65\%$, $\alpha=0.2$ dB/km and $p_{\rm d}=10^{-6}$.


The threshold photon number of the LPs is set to $n_{\rm cut}=4$, resulting in a negligible loss in performance according to our simulations. For illustration purposes, the numbers of decoy settings are set to $d_{\rm key}=d_{\rm test}=4$, in such a way that we can denote $\Gamma^{\rm key}=\{\alpha,\beta,\gamma,\delta\}$ and $\Gamma^{\rm test}=\{\alpha',\beta',\gamma',\delta'\}$. We observe that using four settings instead of three provides a better robustness to loss, and we remark that, contrarily to what happens with an active setup, increasing the number of decoys does not require to modify the hardware in the passive scenario. To be precise, for the key basis, we select consecutive intervals of the form $I_{\delta}/4\nu{}t=[0,w)$, $I_{\gamma}/4\nu{}t=[w,2w)$, $I_{\beta}/4\nu{}t=[2w,3w)$ and $I_{\alpha}/4\nu{}t=[3w,1)$, $w$ denoting a fixed width parameter. For the test basis, we use overlapping intervals instead, \textit{i.e.} $I_{\delta'}/4\nu{}t=[0,w)$, $I_{\gamma'}/4\nu{}t=[0,2w)$, $I_{\beta'}/4\nu{}t=[0,3w)$ and $I_{\alpha'}/4\nu{}t=[0,1)$, maintaining the width parameter $w$ for simplicity. Evidence based on multiple numerical trials indicates that using overlapping intervals in the test basis seems beneficial, but apparently it is not so for the key basis. On top of it, we perform a mild Monte Carlo optimization in the width parameter, $w$, the test-basis measurement probability, $q_{\mathds{T}}$, the output intensity value $\nu{}t$, and the angular widths of the post-selection regions, $\Delta\theta'$, $\Delta\theta$ and $\Delta\phi$. In this regard, we recall that a minor advantage may be achieved by using different ---or even unconstrained--- decoy widths in each basis, or further increasing the number of decoys. Notwithstanding, a preliminary analysis seems to indicate that there is not much room for improvement in this respect. 

As for the finite-key parameters (introduced in Appendix~\ref{security_claims}), we set the fixed error tolerance of the PE bounds to $\epsilon=10^{-20}$ and take $\epsilon_{\rm PA}=\delta=\epsilon$, leading to an overall PE error of $\epsilon_{\rm PE}=21\epsilon$ and a secrecy parameter of $\epsilon_{\rm sec}=\sqrt{\epsilon_{\rm PE}}+\epsilon_{\rm PA}+\delta\approx{}4.58\times{}10^{-10}$. The correctness error probability is set to $\epsilon_{\rm cor}=\epsilon=10^{-20}$ as well. Lastly, all the necessary guesses for Kato’s inequality are selected under the assumption of a perfect characterization of the
experimental setup. Note, however, that this assumption can be easily removed in practice with a minuscule penalty on the key rate, by carefully modelling the channel or
by using the outcomes of previous experiments.

To finish with, in order to account for the EC leakage (which would be quantified precisely in a real experiment), we use a standard model described in Appendix~\ref{channel_model}. The model relies on an EC efficiency parameter, $f_{\rm EC}$, which we set to a typical value of $f_{\rm EC}=1.16$ in the simulations.

\begin{figure}[!htbp]
	\centering 
	\includegraphics[width=8.5cm,height=6cm]{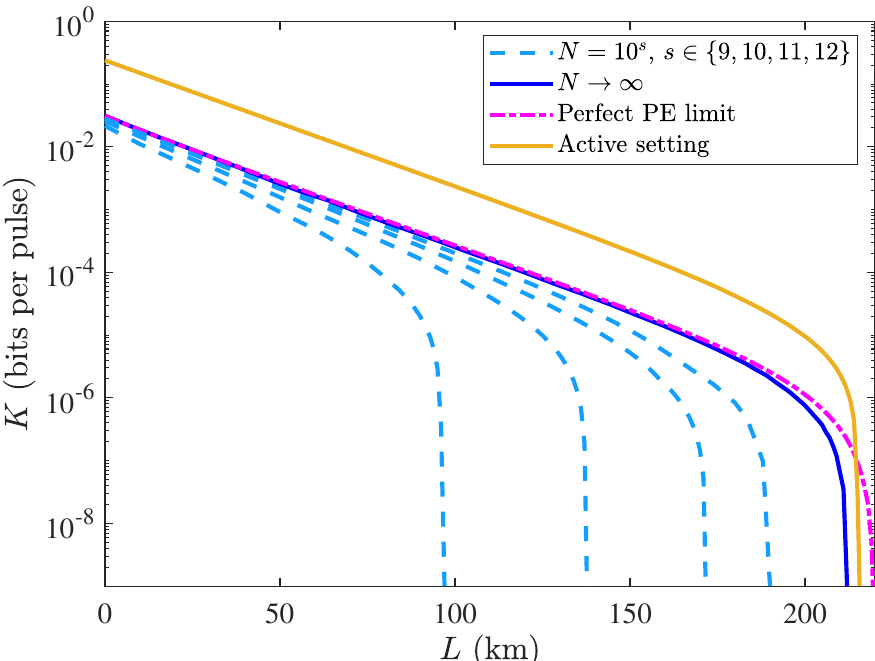}\\
	\caption{Rate-distance performance of the fully passive decoy-state BB84 protocol. The finite-key settings, the experimental parameters and the optimization method are described in the main text. Dashed light-blue lines: finite regime ($N$ = number of transmission rounds); solid dark-blue line: asymptotic regime; dashed-dotted pink line: perfect PE limit; solid yellow line: asymptotic performance of an active protocol.}
	\label{fig:performance}
\end{figure}

The rate-distance plots are shown in Fig.~\ref{fig:performance}, contemplating different numbers of transmission rounds, $N=10^{s}$ with $s\in\{9,10,11,12\}$ (dashed light-blue lines), and the asymptotic limit $N\to{}\infty$ (solid dark-blue line) as well. Since $N\to{}\infty$ favours arbitrarily small test-basis regions with the considered protocol, in that case we conservatively assume finite minimum sizes $\Delta\theta=\Delta\phi=0.1$ and $w=5\times{}10^{-3}$, and optimize the key-basis regions via $\nu{}t$ and $\Delta\theta'$ (also, the consideration of overlapping intervals becomes essentially irrelevant in this regime). By doing this, we observe that the asymptotic key rate $K_{\infty}$ is fairly close to the perfect PE limit (dashed-dotted pink line), in which the estimates of the LPs are replaced by the actual values yielded by the channel model, and the key-basis regions are optimized as well. Importantly, the closeness to the perfect PE limit is not a consequence of the noise-suppressing constraints, which only have an (indeed modest) impact in the finite-size secret key rates ---\textit{e.g.} about a 4\% increase for $N=10^{12}$ and a 16\% for $N=10^{9}$ in our simulations---. Intuitively speaking, the bound on the PHER estimated without noise suppression at all is already ``low enough", in the sense that further decreasing it (even by orders of magnitude) does not have a great impact on the secret key rate. Because of this, it is the size of the key regions ---which governs the sifting and the BER of the protocol--- that essentially determines the secret key rate. 

For the sake of comparison, we have also plotted $K_{\infty}$ for a standard active decoy-state BB84 protocol with three intensity settings (solid yellow line), assuming the same channel model and experimental inputs. The figure shows that, with the considered protocol design and PE method, the key rate of the active approach only exceeds that of the passive approach by less than one order of magnitude. As already pointed out in~\cite{Mike,Zapatero}, this drop is mainly due to the the extra sifting and the higher BER inherent to the passive scenario.
\section{Conclusions and outlook}\label{Outlook}
Passive QKD replaces all active modulation in the QKD hardware by a fixed quantum mechanism and post-selection, in so benefitting from a considerable simplicity, immunity to modulator side channels, and conceivably higher clock rates. In this work, we have derived finite-key security bounds for a fully passive decoy-state BB84 protocol. For this purpose, we have sharpened the security analysis of~\cite{Zapatero} and generalized it to the finite key regime, considering the post-selection strategy originally proposed in~\cite{Mike}. As a result, the secret key rate per pulse achievable with our passive scheme is the closest ever reported to the corresponding key rate with an active setup. On top of it, the bounds derived here have been applied to a pioneering fully passive QKD demonstration~\cite{Lu}, showing that passive QKD solutions are ready for their deployment.

On another note, different avenues could be explored to improve this work. As an example, for moderate numbers of transmission rounds, a slight advantage might be achieved replacing the trace-distance constraints with tighter nonlinear constraints provided by the Cauchy-Schwarz inequality, and running semidefinite programs ---rather than linear programs--- for the parameter estimation (see \textit{e.g.}~\cite{correlations,Xoel}). Interestingly as well, one could attempt to meet existing experimental limitations by relaxing the assumptions on the passive source, say, regarding the perfect
visibility interference or the noiseless measurements in Alice’s module.

\section{Acknowledgements}\label{Acknowledgements}
VZ acknowledges fruitful discussions with Guillermo Currás and Álvaro Navarrete. Both VZ and MC acknowledge support from Cisco Systems Inc., the Galician Regional Government (consolidation of Research Units: AtlantTIC), the Spanish Ministry of Economy and Competitiveness (MINECO), the Fondo Europeo de Desarrollo Regional (FEDER) through the grant No. PID2020-118178RB-C21, MICIN with funding from the European Union NextGenerationEU (PRTR-C17.I1) and the Galician Regional Government with own funding through the “Planes Complementarios de I+D+I con las Comunidades Autónomas” in Quantum Communication, the European Union’s Horizon Europe Framework Programme under the Marie Sklodowska-Curie Grant No. 101072637 (Project QSI) and the project “Quantum Security Networks Partnership” (QSNP, grant agreement No. 101114043).
\section{References}

\appendix
\section{Trace distance argument}\label{TD_argument}
Let $\rho$ and $\tau$ be two density matrices of a quantum system of dimension $d$. The TD argument states that
\begin{equation}
D(\rho,\tau)=\max_{\hat{O}}\left\{\mathrm{Tr}\left[\hat{O}(\rho-\tau)\right]\right\},
\end{equation}
where the maximization is taken over all positive operators $\hat{O}\leq{I}$~\cite{Nielsen}. Notably, from the definition of the TD it follows that $D(\rho,\tau)=\sum_{i=1}^{d}\abs{\lambda_{i}}$, where the $\lambda_{i}$ are the eigenvalues of $\rho-\tau$.
\section{Matrix representation of the Fock states}\label{representation}
Here we provide a matrix representation of the Fock states prepared by the passive source, reproducing the approach in~\cite{Zapatero} (the reader is referred to that work for more details). Precisely, let
\begin{equation}
\mathcal{B}_{n}=\left\{\ket{n-k,k}=\frac{a^{\dagger n-k}_{\rm R}a^{\dagger k}_{\rm L}}{\sqrt{(n-k)!k!}}\ket{\rm vac},\ k=0\ldots{}n\right\},
\end{equation}
where $a^{\dagger}_{\rm R}$ ($a^{\dagger}_{\rm L}$) is the creation operator of the right-handed (left-handed) circular polarization mode. $\mathcal{B}_{n}$ is an orthonormal basis of the Hilbert space of of $n$ indistinguishable photons distributed across two modes. Given $\mathcal{B}_{n}$, one can make use of the canonical isomorphism
\begin{equation}\begin{split}
&\ket{n,0}\rightarrow{}[1\hspace{.05cm}0\ldots{}\hspace{.05cm}0]^{t},\hspace{.1cm}\ket{n-1,1}\rightarrow{}[0\hspace{.05cm}1\ldots{}\hspace{.05cm}0]^{t}\hspace{.1cm}\ldots{}\\
&\ket{1,n-1}\rightarrow{}[0\hspace{.05cm}\ldots{}\hspace{.05cm}1\hspace{.05cm}0]^{t},\hspace{.1cm}\ket{0,n}\rightarrow{}[0\hspace{.05cm}\ldots{}\hspace{.05cm}0\hspace{.05cm}1]^{t},
\end{split}\end{equation}
to obtain the density matrices of the Fock states $\sigma^{\rm key}_{j,n}=\langle{e^{-\mu}{\mu}^{n}/n!\ketbra{n}{n}_{\theta,\phi}}\rangle_{\Omega^{\rm key}_{j}}/\langle{e^{-\mu}{\mu}^{n}/n!}\rangle_{\Omega^{\rm key}_{j}}$ and $\sigma^{\rm H}_{j,n}=\langle{e^{-\mu}{\mu}^{n}/n!\ketbra{n}{n}_{\theta,\phi}}\rangle_{\Omega^{\rm H}_{j}}/\langle{e^{-\mu}{\mu}^{n}/n!}\rangle_{\Omega^{\rm H}_{j}}$ (\textit{i.e.}, the ones required for the LPs in the main text) by recalling that
\begin{equation}\label{fock_theta_phi}
\ket{n}_{\theta,\phi}=\frac{1}{\sqrt{n!}}\left[\cos\left(\frac{\theta}{2}\right)a^{\dagger}_{\rm R}+e^{i\phi}\sin\left(\frac{\theta}{2}\right)a^{\dagger}_{\rm L}\right]^{n}\ket{\rm vac}.
\end{equation}
For instance, the $(r,s)$-th entry of $\sigma^{\rm key}_{j,n}$ is computed as $\bra{n-r+1,r-1}\sigma^{\rm key}_{j,n}\ket{n-s+1,s-1}$ for $r,s=1\ldots{}n+1$, which in fact vanishes for the off-diagonal terms ($r\neq{}s$).
\section{Security claims}\label{security_claims}
\begin{definition}[Correctness]
A pair of keys $(\textbf{S}_{\rm A},\textbf{S}_{\rm B})$ is $\epsilon_{\rm cor}$-correct if $\Pr\left[\textbf{S}_{\rm A}\neq{}\textbf{S}_{\rm B}\right]\leq{}\epsilon_{\rm cor}$.
\end{definition}
\begin{proposition}[Correctness claim]
The output keys of the protocol are $\epsilon_{\rm cor}$-correct.
\end{proposition}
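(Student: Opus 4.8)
The plan is to follow Alice's and Bob's bit strings through the error-correction (EC), error-verification (EV) and privacy-amplification (PA) steps, and to reduce correctness of the final keys to the collision probability of the two-universal hash family used in EV. First I would fix notation: let $\mathbf{S}_{\rm A}'$ and $\mathbf{S}_{\rm B}'$ denote the reconciled sifted keys held by Alice and Bob immediately after the EC step (so that, in the absence of residual errors, $\mathbf{S}_{\rm B}'=\mathbf{S}_{\rm A}'$), and adopt the usual abort convention: whenever the protocol aborts---at the PE step, at the EV step, or earlier---we set $\mathbf{S}_{\rm A}=\mathbf{S}_{\rm B}=\varnothing$, so that these trivially agree and contribute nothing to $\Pr[\mathbf{S}_{\rm A}\neq\mathbf{S}_{\rm B}]$.

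Second, I would isolate the only non-trivial event. Since PA is applied to $\mathbf{S}_{\rm A}'$ and $\mathbf{S}_{\rm B}'$ with a common hash function (chosen by public randomness and hence identical on both sides), PA is a deterministic map once that function is fixed, so $\mathbf{S}_{\rm A}'=\mathbf{S}_{\rm B}'$ forces $\mathbf{S}_{\rm A}=\mathbf{S}_{\rm B}$. Because the final keys are released only if EV passes, a final disagreement requires both that EV passes and that the reconciled keys already disagree; therefore
\begin{equation}
\Pr\bigl[\mathbf{S}_{\rm A}\neq\mathbf{S}_{\rm B}\bigr]\leq\Pr\bigl[\mathbf{S}_{\rm A}'\neq\mathbf{S}_{\rm B}'\ \wedge\ \textup{EV passes}\bigr].
\end{equation}

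Third---the crux---I would bound the right-hand side using the defining property of the two-universal family from which the EV tag is drawn. Writing $h$ for the EV hash function, whose output has $\lceil\log(1/\epsilon_{\rm cor})\rceil$ bits, two-universality gives $\Pr_{h}[h(s)=h(s')]\leq 2^{-\lceil\log(1/\epsilon_{\rm cor})\rceil}\leq\epsilon_{\rm cor}$ for every fixed pair $s\neq s'$, provided $h$ is sampled by randomness independent of $(\mathbf{S}_{\rm A}',\mathbf{S}_{\rm B}')$. Since ``EV passes'' is precisely the event $h(\mathbf{S}_{\rm A}')=h(\mathbf{S}_{\rm B}')$, conditioning on the value of the reconciled keys and averaging yields
\begin{equation}
\Pr\bigl[\mathbf{S}_{\rm A}'\neq\mathbf{S}_{\rm B}'\ \wedge\ h(\mathbf{S}_{\rm A}')=h(\mathbf{S}_{\rm B}')\bigr]=\sum_{s\neq s'}\Pr\bigl[\mathbf{S}_{\rm A}'=s,\,\mathbf{S}_{\rm B}'=s'\bigr]\,\Pr\nolimits_{h}\bigl[h(s)=h(s')\bigr]\leq\epsilon_{\rm cor},
\end{equation}
which combined with the previous display proves the claim.

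The main obstacle is not any computation but the bookkeeping of conditional independence: one must verify that the EV hash function is sampled after (or at least independently of) the reconciled keys and of Eve's interventions, so that the two-universal bound may be invoked pointwise in $(\mathbf{S}_{\rm A}',\mathbf{S}_{\rm B}')$ even though those strings are correlated with Eve's register. This is exactly where the ``sample $h$, then announce $h$ together with $h(\mathbf{S}_{\rm A}')$'' structure of the EV step is used, and where the round-by-round adversary model of Sec.~\ref{Assumptions} guarantees that nothing earlier biases that sampling. The remaining ingredients---the abort convention, the determinism of PA given its hash, and the union over disagreeing pairs (which sums to at most one)---are routine.
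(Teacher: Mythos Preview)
Your proof is correct and follows essentially the same route as the paper's: the abort convention disposes of the trivial branch, determinism of PA reduces final-key disagreement to reconciled-key disagreement under EV success, and the two-universal collision bound for the $\lceil\log(1/\epsilon_{\rm cor})\rceil$-bit EV tag finishes the estimate. Your explicit sum over pairs $(s,s')$ and the accompanying remark on the independence of the EV-hash sampling from $(\mathbf{S}_{\rm A}',\mathbf{S}_{\rm B}')$ make one step more transparent than the paper's terser conditional-probability formulation, but the argument is otherwise identical.
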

\begin{proof}
Let $\bot$ ($\top$) denote the abortion (non-abortion) event. Assuming that the protocol outputs two identical default symbols in case of abortion, \textit{i.e.} $\Pr\left[\textbf{S}_{\rm A}\neq{}\textbf{S}_{\rm B}\ \wedge\ \bot\right]=0$, and hence correctness follows if $\Pr\left[\textbf{S}_{\rm A}\neq{}\textbf{S}_{\rm B}\ \wedge\ \top\right]\leq{}\epsilon_{\rm cor}$. Let $h_{\rm EV}$ ($h_{\rm PA}$) denote the two-universal hash function deployed for EV (PA), and let $\textbf{K}_{\rm A}$ ($\textbf{K}_{\rm B}$) stand for Alice's sifted key (Bob's error-corrected key), such that $\textbf{S}_{\rm A}=h_{\rm PA}(\textbf{K}_{\rm A})$ ($\textbf{S}_{\rm B}=h_{\rm PA}(\textbf{K}_{\rm B})$). We have
\begin{equation}\begin{split}
&\Pr\left[\textbf{S}_{\rm A}\neq{}\textbf{S}_{\rm B}\ \wedge\ \top\right]\leq{}\\
&\Pr\left[h_{\rm PA}(\textbf{K}_{\rm A})\neq{}h_{\rm PA}(\textbf{K}_{\rm B})\ \wedge\ {}h_{\rm EV}(\textbf{K}_{\rm A})=h_{\rm EV}(\textbf{K}_{\rm B})\right]\leq{}\\
&\Pr\left[\textbf{K}_{\rm A}\neq\textbf{K}_{\rm B}\ \wedge\ h_{\rm EV}(\textbf{K}_{\rm A})=h_{\rm EV}(\textbf{K}_{\rm B})\right]=\\
&\Pr\left[\textbf{K}_{\rm A}\neq\textbf{K}_{\rm B}\right]\Pr\left[h_{\rm EV}(\textbf{K}_{\rm A})=h_{\rm EV}(\textbf{K}_{\rm B})\ |\ \textbf{K}_{\rm A}\neq\textbf{K}_{\rm B}\right]\leq{}\\
&\Pr\left[h_{\rm EV}(\textbf{K}_{\rm A})=h_{\rm EV}(\textbf{K}_{\rm B})\ |\ \textbf{K}_{\rm A}\neq\textbf{K}_{\rm B}\right]\leq{}\epsilon_{\rm cor}.
\end{split}\end{equation}
The first inequality follows because $\{\top\}\implies{}\{h_{\rm EV}(\textbf{K}_{\rm A})=h_{\rm EV}(\textbf{K}_{\rm B})\}$, the second inequality follows because $\{h_{\rm PA}(\textbf{K}_{\rm A})\neq{}h_{\rm PA}(\textbf{K}_{\rm B})\}\implies{}\{\textbf{K}_{\rm A}\neq\textbf{K}_{\rm B}\}$, the third inequality follows because $\Pr\left[\textbf{K}_{\rm A}\neq\textbf{K}_{\rm B}\right]\leq{}1$ and the fourth inequality follows by the definition of two-universal hashing~\cite{Wegman}.
\end{proof}
\begin{definition}[Secrecy]
 Let $\rho_{\rm SE}=\sum_{s_{\rm A}}\Pr[\textbf{S}_{\rm A}=s_{\rm A}]\ketbra{s_{\rm A}}{s_{\rm A}}_{\rm A}\otimes{}\rho^{s_{\rm A}}_{\rm E}$ be a bipartite cq state describing a classical register $A$ ---modelling a RV $\textbf{S}_{\rm A}$--- and a quantum system $E$. $\textbf{S}_{\rm A}$ is $\epsilon_{\rm sec}$-secret from $E$ if $\min_{\displaystyle{\sigma_{\rm E}}}D\left(\rho_{\rm SE},\omega_{\rm S}\otimes{}\sigma_{\rm E}\right)\leq{}\epsilon_{\rm sec}$, where $\omega_{\rm S}$ denotes the fully mixed state on A.
 \end{definition}
 \begin{proposition}[Secrecy claim]
Alice's output key in the protocol is $\left(\sqrt{\epsilon_{\rm PE}}+\epsilon_{\rm PA}+\delta\right)$-secret from $E$ for
\begin{equation}\label{keylength}
l=\left\lfloor{M_{\rm key,1}^{\rm L}\left[1-h\left(e^{(\rm ph)\hspace{.05cm}\rm U}_{1}\right)\right]-\lambda_{\rm EC}-\log\left(\frac{1}{2\epsilon_{\rm cor}\epsilon_{\rm PA}^{2}\delta}\right)}\right\rfloor,
\end{equation}
where $\delta>0$ is arbitrary.
\end{proposition}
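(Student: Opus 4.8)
The plan is to follow the standard Renner-type secrecy argument, adapting the entanglement-based picture of Sec.~\ref{post-selection} to the finite-key regime. First I would invoke the entanglement-based source-replacement scheme: the single-photon rounds in $\mathcal{X}^{\rm key}$ correspond to Alice holding ancillas $A$ of the states $\ket{\Psi_{j,1}}_{\rm ASB}$ of Eq.~(\ref{EB_key}), which she could equivalently measure in the test basis $\{\ket{\rm H}_{\rm A},\ket{\rm V}_{\rm A}\}$. The key object is then the smooth min-entropy $H_{\min}^{\epsilon}(\textbf{Z}_1 \mid E)$ of the single-photon bit string $\textbf{Z}_1$ (Alice's key-basis outcomes on the $M_{\rm key,1}$ single-photon rounds) conditioned on Eve's system $E$, together with the public-discussion transcript. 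I would apply the quantum leftover hash lemma~\cite{LHL}: privacy amplification by two-universal hashing to length $l$ produces a key that is $\tfrac{1}{2}\sqrt{2^{l - H_{\min}^{\epsilon}(\textbf{Z}_1\mid E')}} + \epsilon$-close to ideal, where $E'$ bundles $E$ with the EC and EV leakage. Accounting for the EC leakage $\lambda_{\rm EC}$ and the EV tag of size $\lceil\log(1/\epsilon_{\rm cor})\rceil$ via a chain rule for smooth min-entropy~\cite{Vitanov} reduces the problem to lower-bounding $H_{\min}^{\epsilon}(\textbf{Z}_1\mid E)$ for the raw single-photon data.

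Next I would bound that min-entropy using the entropic uncertainty relation for smooth entropies~\cite{EUR}: $H_{\min}^{\epsilon}(\textbf{Z}_1\mid E) + H_{\max}^{\epsilon}(\textbf{Z}_1\mid B) \geq M_{\rm key,1}$, where $\textbf{X}_1$ is the (virtual) test-basis outcome string on those same single-photon rounds. The term $H_{\max}^{\epsilon}(\textbf{X}_1\mid B)$ is controlled by the phase-error rate: since measuring $A$ in the test basis on $\ket{\Psi_{j,1}}_{\rm ASB}$ prepares $\ket{\rm H}_{\rm B}$ or $\ket{\rm V}_{\rm B}$ (with the shield $S$ in a product-with-$B$ state), the virtual phase errors are exactly the bit discrepancies between Alice's and Bob's test-basis values, so $H_{\max}^{\epsilon}(\textbf{X}_1\mid B) \leq M_{\rm key,1}\, h(e_1^{\rm (ph)})$ up to the smoothing. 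Plugging in the high-probability bounds $M_{\rm key,1} > M_{\rm key,1}^{\rm L}$ from Eq.~(\ref{Kato_reverse_2}) and $e_1^{\rm (ph)} < e_1^{\rm (ph)\,U}$ from Eq.~(\ref{final_bound}), and using monotonicity of $x[1-h(y)]$ in the relevant ranges, yields $H_{\min}^{\epsilon}(\textbf{Z}_1\mid E) \gtrsim M_{\rm key,1}^{\rm L}[1 - h(e_1^{\rm (ph)\,U})]$.

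Finally I would collect all the failure probabilities: the PE failure $\epsilon_{\rm PE} = \epsilon[2(d_{\rm key}+d_{\rm test})+5]$ (which enters the secrecy parameter as $\sqrt{\epsilon_{\rm PE}}$ through the standard trick of conditioning on the PE estimates being valid, so that the post-selected state is $\sqrt{\epsilon_{\rm PE}}$-close in trace distance to one for which the entropy bound holds), the smoothing parameters absorbed into $\delta$, and the leftover-hash term $\epsilon_{\rm PA}$. Tracking the logarithmic corrections from the chain rule and the hash lemma gives precisely the $-\log\bigl(1/(2\epsilon_{\rm cor}\epsilon_{\rm PA}^2\delta)\bigr)$ term in Eq.~(\ref{keylength}), and the floor function comes from $l$ being an integer length. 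I expect the main obstacle to be the careful bookkeeping around the phase-error estimate: one must verify that the decoy-based bound $e_1^{\rm (ph)\,U}$, which was derived for the \emph{round-averaged} ideal-state error probability, correctly upper-bounds the sampling-based phase-error rate of the actual single-photon subset $\mathcal{X}^{\rm key}$ — this is where the random-sampling (Serfling) step of Sec.~\ref{error_yield} and the adversary model of Sec.~\ref{Assumptions} must be dovetailed with the entropic uncertainty relation, ensuring no circularity in conditioning. The rest is standard finite-key composition.
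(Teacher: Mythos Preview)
Your proposal is correct and follows essentially the same route as the paper: quantum leftover hash lemma, chain rules to peel off the EC/EV leakage and the multi-photon register, the entropic uncertainty relation on the single-photon subsystem, and the $\sqrt{\epsilon_{\rm PE}}$-close-state argument to replace the actual state by one on which the PE bounds hold deterministically. Two small points of precision to fix when you write it out: the $\log(1/\delta)$ term in Eq.~(\ref{keylength}) comes specifically from the generalized chain rule of~\cite{Vitanov} used to drop the non-single-photon part $\textbf{K}_{\rm A}^{\rm rest}$ (not from the EC/EV step, which uses the simpler classical-conditioning chain rule), and the closeness in the conditioning trick is in \emph{purified} distance rather than trace distance, since that is the metric defining the smoothing ball of the smooth min-entropy.
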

\begin{proof}
The secrecy claim is established here on the basis of the quantum leftover hash lemma~\cite{LHL}. Below we recast the statement of the lemma provided in~\cite{Tomamichel}.
\begin{theorem}[Quantum leftover hash lemma]
Let $\sigma_{\rm KE}=\sum_{k_{\rm A}}\Pr[\textbf{K}_{\rm A}=k_{\rm A}]\ketbra{k_{\rm A}}{k_{\rm A}}_{\rm A}\otimes{}\sigma^{k_{\rm A}}_{\rm E}$ be a bipartite cq state, and let $C$ be a classical RV publicly revealed. Applying random two-universal hashing on the register $A$, an $(\epsilon+\epsilon_{\rm PA})$-secret-from-$E$ key of length
\begin{equation}\label{PA}
l=\left\lfloor{H_{\rm min}^{\epsilon}(\textbf{K}_{\rm A}|CE)_{\sigma}-\log\left(\frac{1}{4\epsilon_{\rm PA}^{2}}\right)}\right\rfloor
\end{equation}
can be extracted, where $H_{\rm min}^{\epsilon}(\textbf{K}_{\rm A}|CE)_{\sigma}$ stands for the $\epsilon$-smooth conditional min-entropy of $\textbf{K}_{\rm A}$ given access to $E$ and $C$, evaluated in $\sigma_{\rm KE}$.
\end{theorem}
For our purposes, $\sigma_{\rm KE}$ is a fully general description of Alice's and Eve's cq state prior to PA, from which the final state $\rho_{\rm SE}$ is obtained in the protocol via random two-universal hashing on the register $A$. According to Theorem 1, the goal is to derive a lower bound on $H_{\rm min}^{\epsilon}(\textbf{K}_{\rm A}|CE)_{\sigma}$, which is the purpose of the rest of the proof.\\

The RV $C$ matches the information revealed by EC and EV in the protocol. Since EC discloses $\lambda_{\rm EC}$ syndrome bits at most and EV discloses $\lceil{\log(1/\epsilon_{\rm cor})}\rceil<\log\left({2}/{\epsilon_{\rm cor}}\right)$ EV-tag bits at most, it follows that
\begin{equation}\label{chain_rule}
H_{\rm min}^{\epsilon}(\textbf{K}_{\rm A}|CE)_{\sigma}\geq{}H_{\rm min}^{\epsilon}(\textbf{K}_{\rm A}|E)_{\sigma}-\lambda_{\rm EC}-\log\left(\frac{2}{\epsilon_{\rm cor}}\right),
\end{equation}
by applying a chain rule for smooth min-entropies~\cite{Tomamichel}. Coming next, we decompose the RV $\textbf{K}_{\rm A}$ as $\textbf{K}_{\rm A}=\textbf{K}_{\rm A}^{1}\textbf{K}_{\rm A}^{\rm rest}$, where $\textbf{K}_{\rm A}^{1}$ describes those sifted key bits corresponding to single-photon detection events, and $\textbf{K}_{\rm A}^{\rm rest}$ describes the rest of the bits. From this decomposition and a generalized chain rule given in~\cite{Vitanov}, it follows that
\begin{equation}\label{chain_rule_2}
H_{\rm min}^{\epsilon}(\textbf{K}_{\rm A}|E)_{\sigma}\geq{}H_{\rm min}^{\epsilon-\delta}(\textbf{K}_{\rm A}^{1}|E)_{\sigma}-\log\left(\frac{1}{\delta}\right)
\end{equation}
for all $\delta\in(0,\epsilon)$, and plugging Eq.~(\ref{chain_rule_2}) into Eq.~(\ref{chain_rule}) yields
\begin{equation}\label{semifinal}
H_{\rm min}^{\epsilon}(\textbf{K}_{\rm A}|CE)_{\sigma}\geq{}H_{\rm min}^{\epsilon-\delta}(\textbf{K}_{\rm A}^{1}|E)_{\sigma}-\lambda_{\rm EC}-\log\left(\frac{2}{\epsilon_{\rm cor}\delta}\right).
\end{equation}
Thus, all that remains is to lower-bound the term $H_{\rm min}^{\epsilon-\delta}(\textbf{K}_{\rm A}^{1}|E)_{\sigma}$. For this purpose, we make use of the entropic uncertainty relation~\cite{EUR}, which we present below in the restricted case of projective measurements (that suffice for our analysis).
\begin{theorem}[Entropic uncertainty relation]
Let $\tau_{\rm ABE}$ be a tripartite quantum state, and let $\mathds{K}=\{\ketbra{k}{k}_{\rm A}\}$ and $\mathds{T}=\{\ketbra{t}{t}_{\rm A}\}$ be two projective measurements on A, $\textbf{K}$ and $\textbf{T}$ denoting the RVs associated to their measurement outcomes. Then, for all $\varepsilon\geq{}0$,
\begin{equation}
H_{\rm min}^{\varepsilon}(\textbf{K}|E)_{\tau}+H_{\rm max}^{\varepsilon}(\textbf{T}|B)_{\tau}\geq{}q,
\end{equation}
where $H_{\rm min}^{\varepsilon}(\textbf{K}|E)_{\tau}$ stands for the $\epsilon$-smooth conditional min-entropy of $\textbf{K}$ given $E$ evaluated in the bipartite cq state $\tau_{\rm KE}=\sum_{k}\Pr[\textbf{K}=k]\ketbra{k}{k}_{\rm A}\otimes{}\tau^{k}_{\rm E}$, $H_{\rm max}^{\varepsilon}(\textbf{T}|B)_{\tau}$ stands for the $\epsilon$-smooth conditional max-entropy of $\textbf{T}$ given $B$ evaluated in the bipartite cq state $\tau_{\rm TB}=\sum_{t}\Pr[\textbf{T}=t]\ketbra{t}{t}_{\rm A}\otimes{}\tau^{t}_{\rm B}$, and $q=\log\frac{1}{c}$ for $c=\max_{k,t}\abs{\braket{k}{t}}^{2}$.
\end{theorem}

Let $\tau^{1}_{\rm ABE}$ be the cqq state describing:
\begin{enumerate}
\item{The sub-register $A^{1}$ of $A$ that models the RV $\textbf{K}_{\rm A}^{1}$ ($\textbf{T}_{\rm A}^{1}$) via a projective measurement $\mathds{K}$ ($\mathds{T}$) in the key basis (test basis).}
\item{Bob's quantum side information on $A^{1}$, say $B^{1}$.}
\item{Eve's quantum side information on $A^{1}$, say $E^{1}$.}
\end{enumerate}
Crucially, this choice of $\tau^{1}_{\rm ABE}$ is such that $\sigma_{\rm KE}$ provides an extension of $\tau^{1}_{\rm KE}=\sum_{k^{1}_{\rm A}}\Pr[\textbf{K}^{1}_{\rm A}=k^{1}_{\rm A}]\ketbra{k^{1}_{\rm A}}{k^{1}_{\rm A}}_{\rm A}\otimes{}\tau^{k^{1}_{\rm A}}_{\rm E}$, as the former is related to the latter by suitable partial tracing. This will be relevant to link the entropies of both states. 

Also, we have that $\mathds{K}=\left\{\ketbra{k_{\rm A}^{1}}{k_{\rm A}^{1}}\right\}$ and $\mathds{T}=\left\{\ketbra{t_{\rm A}^{1}}{t_{\rm A}^{1}}\right\}$ for $k_{\rm A}^{1}\in\{\mathrm{R,L}\}^{M_{\rm key,1}}$ and $t_{\rm A}^{1}\in\{\mathrm{H,V}\}^{M_{\rm key,1}}$, where $M_{\rm key,1}$ denotes the number of bits in the register $A^{1}$ (\textit{i.e.,} the size of $\textbf{K}_{\rm A}^{1}$). It follows then that $c=\max_{k,t}\abs{\braket{k_{\rm A}^{1}}{t_{\rm A}^{1}}}^{2}=2^{-M_{\rm key,1}}$, and in virtue of Theorem 2 we have that
\begin{equation}\label{EUR}
H_{\rm min}^{\varepsilon}(\textbf{K}_{\rm A}^{1}|E^{1})_{\tau^{1}}\geq{}M_{\rm key,1}-H_{\rm max}^{\varepsilon}(\textbf{T}_{\rm A}^{1}|B^{1})_{\tau^{1}}.
\end{equation}
Eq.~(\ref{EUR}) sets a fundamental trade-off between Eve's and Bob's predictive powers on the outcomes of mutually unbiased measurements $\mathds{K}$ and $\mathds{T}$ performed on $A^{1}$. Importantly, this trade-off is established \emph{a priori} of any protocol execution.

Similarly, the PE procedure explained in Section~\ref{secret_key_parameters} also provides an \emph{a priori} statistical constraint on the RVs $M_{\rm key,1}$ and $m^{(\rm ph)}_{\rm key,1}$ on $\tau^{1}_{\rm ABE}$. Namely, that the event
\begin{equation}
\omega=\left\{M_{\rm key,1}\notin\left(M_{\rm key,1}^{\rm L},M_{\rm key,1}^{\rm U}\right)\ \cup\ m^{(\rm ph)}_{\rm key,1}\geq{}m^{(\rm ph)\hspace{.05cm}U}_{\rm key,1}\right\}
\end{equation}
(where the bounding RVs $M_{\rm key,1}^{\rm L}$, $M_{\rm key,1}^{\rm U}$ and $m^{(\rm ph)\hspace{.05cm}U}_{\rm key,1}$ are defined in Sec.~\ref{secret_key_parameters}), satisfies $\Pr[\omega]\leq{}\epsilon_{\rm PE}$. This result allows to lower bound $H_{\rm min}^{\varepsilon}(\textbf{K}_{\rm A}^{1}|E^{1})_{\tau^{1}}$ for a suitably chosen $\varepsilon$ using the following lemma~\cite{Leverrier}.
\begin{lemma}
Let $\rho$ be a quantum state and let $\omega$ be an arbitrary event. If $\rho$ is such that $\Pr[\omega]\leq{}\epsilon$, there exists another state $\tilde\rho$, $\sqrt{\epsilon}$-close to $\rho$ in purified distance, \textit{i.e.} verifying $P(\rho,\tilde\rho)\leq{}\sqrt{\epsilon}$, for which $\Pr[\omega]=0$.
\end{lemma}
In virtue of Lemma 1 (the reader is referred to~\cite{Leverrier} for a definition of the purified distance), there exists a quantum state $\tilde{\tau}^{1}_{\rm ABE}$ with $P({\tau}^{1}_{\rm ABE},\tilde{\tau}^{1}_{\rm ABE})\leq{}\sqrt{\epsilon_{\rm PE}}$ for which $\Pr[\omega]=0$. In what follows we use this state as an artifact to lower bound $H_{\rm min}^{\sqrt{\epsilon_{\rm PE}}}(\textbf{K}_{\rm A}^{1}|E^{1})_{\tau^{1}}$.

First of all, particularizing Eq.~(\ref{EUR}) for $\tilde{\tau}^{1}_{\rm ABE}$ in the limit case $\varepsilon=0$, we have that
\begin{equation}\label{EUR_2}
H_{\rm min}(\textbf{K}_{\rm A}^{1}|E^{1})_{\tilde\tau^{1}}\geq{}M_{\rm key,1}-H_{\rm max}(\textbf{T}_{\rm A}^{1}|B^{1})_{\tilde\tau^{1}}.
\end{equation}
Coming next, we shall derive an upper bound on the non-smooth max-entropy $H_{\rm max}(\textbf{T}_{\rm A}^{1}|B^{1})_{\tilde{\tau}^{1}}$. For this purpose, let us consider the set of quantum signals that contribute to $\textbf{K}_{\rm A}^{1}$. In particular, let $\textbf{T}_{\rm B}^{1}$ be the RV defined by the outcome of measuring this set of signals in the test basis. Clearly, $H_{\rm max}(\textbf{T}_{\rm A}^{1}|B^{1})_{\tilde{\tau}^{1}}\leq{}H_{\rm max}(\textbf{T}_{\rm A}^{1}|\textbf{T}_{\rm B}^{1})_{\tilde{\tau}^{1}}$ because $\textbf{T}_{\rm B}^{1}$ is accessible given $B^{1}$. Moreover, $H_{\rm max}(\textbf{T}_{\rm A}^{1}|\textbf{T}_{\rm B}^{1})_{\tilde{\tau}^{1}}$ is quantified by the logarithm of the number of outcomes for $\textbf{T}_{\rm A}^{1}$ with a nonzero probability to occur. Precisely~\cite{Leverrier},
\begin{equation}\begin{split}
&H_{\rm max}(\textbf{T}_{\rm A}^{1}|\textbf{T}_{\rm B}^{1})_{\tilde{\tau}}\leq{}\\
&\max_{\displaystyle{t_{\rm B}^{1}:\Pr[\textbf{T}_{\rm B}^{1}=t_{\rm B}^{1}]>0}}\log{\left\lvert\biggl\{t_{\rm A}^{1}\ :\ \Pr\Bigl[\textbf{T}_{\rm A}^{1}=t_{\rm A}^{1}|\textbf{T}_{\rm B}^{1}=t_{\rm B}^{1}\Bigr]\biggr\}\right\rvert},
\end{split}\end{equation}
and given an upper bound $e^{(\rm ph)\hspace{.05cm}\rm U}_{1}$ on the ratio of errors $e^{(\rm ph)}_{1}$ between $\textbf{T}_{\rm A}^{1}$ and $\textbf{T}_{\rm B}^{1}$ ---namely, the so-called phase-error rate--- this can be upper-bounded via~\cite{Lint}
\begin{equation}\label{max}
H_{\rm max}(\textbf{T}_{\rm A}^{1}|\textbf{T}_{\rm B}^{1})_{\tilde{\tau}}\leq{}M_{\rm key,1}\ h\left(e^{(\rm ph)\hspace{.05cm}\rm U}_{1}\right),
\end{equation}
where $h(x)$ is the binary entropy function. Plugging this into Eq.~(\ref{EUR_2}) we obtain
\begin{equation}\label{EUR_3}
H_{\rm min}(\textbf{K}_{\rm A}^{1}|E^{1})_{\tilde\tau^{1}}\geq{}M_{\rm key,1}\left[1-h\left(e^{(\rm ph)\hspace{.05cm}\rm U}_{1}\right)\right].
\end{equation}
Crucially for $\tilde\tau^{1}_{\rm ABE}$, $M_{\rm key,1}$ and $e^{(\rm ph)\hspace{.05cm}\rm U}_{1}$ can be related to observables of the protocol because of the fact that $\Pr[\omega]=0$. On the one hand, the latter implies that $M_{\rm key,1}\in\bigl(M_{\rm key,1}^{\rm L},M_{\rm key,1}^{\rm U}\bigr)$. On the other hand, it implies that $m^{(\rm ph)}_{\rm key,1}\leq{}m^{(\rm ph)\hspace{.05cm}U}_{\rm key,1}$, such that one can choose
\begin{equation}
e^{(\rm ph)\hspace{.05cm}\rm U}_{1}=\min\left\{\frac{m^{(\rm ph)\hspace{.05cm}\rm U}_{\rm key,1}}{M_{\rm key,1}^{\rm L}},\frac{1}{2}\right\}
\end{equation}
for $\tilde\tau^{1}_{\rm ABE}$. As a consequence, it follows that
\begin{equation}\label{PA_term}
H_{\rm min}(\textbf{K}_{\rm A}^{1}|E^{1})_{\tilde\tau^{1}}\geq{}M_{\rm key,1}^{\rm L}\left[1-h\left(e^{(\rm ph)\hspace{.05cm}\rm U}_{1}\right)\right].
\end{equation}
Now, since $H_{\rm min}^{\varepsilon}(\textbf{X}|Y)_{\rho}=\max_{\displaystyle{P(\rho,\tilde\rho)\leq{}\varepsilon}}H_{\rm min}(\textbf{X}|Y)_{\tilde\rho}$, it follows from Eq.~(\ref{PA_term}) that $H_{\rm min}^{\sqrt{\epsilon_{\rm PE}}}(\textbf{K}_{\rm A}^{1}|E^{1})_{\tau^{1}}\geq{}M_{\rm key,1}^{\rm L}\bigl[1-h\bigl(e^{(\rm ph)\hspace{.05cm}\rm U}_{1}\bigr)\bigr]$ by noticing that $P({\tau}^{1}_{\rm KE},\tilde{\tau}^{1}_{\rm KE})\leq{}P({\tau}^{1}_{\rm ABE},\tilde{\tau}^{1}_{\rm ABE})\leq{}\sqrt{\epsilon_{\rm PE}}$ (in the penultimate inequality we are using the fact that the purified distance is non-increasing under trace non-increasing completely positive maps~\cite{thesis}).

In fact, the exact same bound holds for $H_{\rm min}^{\sqrt{\epsilon_{\rm PE}}}(\textbf{K}_{\rm A}^{1}|E)_{\sigma}$ directly ---and not just for $H_{\rm min}^{\sqrt{\epsilon_{\rm PE}}}(\textbf{K}_{\rm A}^{1}|E^{1})_{\tau^{1}}$--- in virtue of the following lemma~\cite{QIP}.
\begin{lemma}
For any two states $\tau$ and $\tilde\tau$, and any extension $\sigma$ of $\tau$, there exists an extension $\tilde\sigma$ with $P(\sigma,\tilde\sigma)=P(\tau,\tilde\tau)$.
\end{lemma}
Particularly, since $\sigma_{\rm KE}$ is an extension of $\tau^{1}_{\rm KE}$ and $P(\tau^{1}_{\rm KE},\tilde\tau^{1}_{\rm KE})\leq{}\sqrt{\epsilon_{\rm PE}}$, there exists an extension $\tilde\sigma_{\rm KE}$ of $\tilde\tau^{1}_{\rm KE}$ such that $P(\sigma_{\rm KE},\tilde\sigma_{\rm KE})\leq{}\sqrt{\epsilon_{\rm PE}}$, meaning that $H_{\rm min}^{\sqrt{\epsilon_{\rm PE}}}(\textbf{K}_{\rm A}^{1}|E)_{\sigma}\geq{}H_{\rm min}(\textbf{K}_{\rm A}^{1}|E)_{\tilde\sigma}$. What is more, let $p_{\rm guess}(\textbf{X}|Y)_{\rho}$ denote the probability of guessing $\textbf{X}$ given access to $Y$ in an underlying quantum state $\rho$. As any other extension of $\tilde\tau^{1}_{\rm KE}$, $\tilde\sigma_{\rm KE}$ verifies $p_{\rm guess}(\textbf{K}_{\rm A}^{1}|E)_{\tilde\sigma}=p_{\rm guess}(\textbf{K}_{\rm A}^{1}|E^{1})_{\tilde\tau^{1}}$ by the ad hoc definition of $E^{1}$ ---according to which $E^{1}$ is the share of $E$ correlated to $A^{1}$ within the underlying state---. Therefore, $H_{\rm min}(\textbf{K}_{\rm A}^{1}|E)_{\tilde\sigma}=H_{\rm min}(\textbf{K}_{\rm A}^{1}|E^{1})_{\tilde\tau^{1}}$ by definition of the non-smooth min-entropy~\cite{Leverrier}.

In summary, the extension $\tilde\sigma_{\rm KE}$ provided by Lemma 2 allows to establish that
\begin{equation}\begin{split}
&H_{\rm min}^{\sqrt{\epsilon_{\rm PE}}}(\textbf{K}_{\rm A}^{1}|E)_{\sigma}\geq{}H_{\rm min}(\textbf{K}_{\rm A}^{1}|E)_{\tilde\sigma}=H_{\rm min}(\textbf{K}_{\rm A}|E^{1})_{\tilde\tau^{1}}\geq{}\\
&M_{\rm key,1}^{\rm L}\left[1-h\left(e^{(\rm ph)\hspace{.05cm}\rm U}_{1}\right)\right],
\end{split}\end{equation}
where the last inequality follows from Eq.~(\ref{PA_term}).
According to Eq.~(\ref{semifinal}), this bound on $H_{\rm min}^{\sqrt{\epsilon_{\rm PE}}}(\textbf{K}_{\rm A}^{1}|E)_{\sigma}$ implies the bound
\begin{equation}\begin{split}
&H_{\rm min}^{\sqrt{\epsilon_{\rm PE}}+\delta}(\textbf{K}_{\rm A}|CE)_{\sigma}\geq{}M_{\rm key,1}^{\rm L}\left[1-h\left(e^{(\rm ph)\hspace{.05cm}\rm U}_{\rm 1}\right)\right]\\
&-\lambda_{\rm EC}-\log\left(\frac{2}{\epsilon_{\rm cor}\delta}\right)
\end{split}\end{equation}
for all $\delta>0$. Lastly, since $\rho_{\rm SE}$ is obtained in the protocol by applying 
random two-universal hashing on the register $A$ of $\sigma_{\rm KE}$, the secrecy claim follows from Theorem 1.
\end{proof}
\begin{widetext}
\section{Kato's inequality}\label{Kato}
Let $\xi_{1},\xi_{2}\ldots{}\xi_{N}$ be a sequence of Bernoulli RVs, and let $\mathcal{F}_{0}\subseteq{}\mathcal{F}_{1}\subseteq{}\ldots{}\subseteq{}\mathcal{F}_{N}$ be an increasing chain of $\sigma$-algebras verifying $E\left(\xi_{v}|\mathcal{F}_{u}\right)=\xi_{v}$ for $v\leq{}u$. Kato's inequality~\cite{Kato} states that
\begin{equation}\label{statement}
\Pr\left[\sum_{u=1}^{N}\Pr\left(\xi_{u}=1|\mathcal{F}_{u-1}\right)-\Lambda_{N}\geq{}\left[b+a\left(\frac{2\Lambda_{N}}{N}-1\right)\right]\sqrt{N}\right]\leq{}\exp\left[\frac{-2(b^{2}-a^{2})}{\displaystyle{\left(1+\frac{4{}a}{3\sqrt{N}}\right)^{2}}}\right]
\end{equation}
for any $N\in\mathbb{N}$, $a\in\mathbb{R}$ and $b\in\mathbb{R}^{+}$ such that $b>\abs{a}$, where we have defined $\Lambda_{l}=\sum_{u=1}^{l}\xi_{u}$. Moreover, replacing $\xi_{l}$ by $1-\xi_{l}$ and $a$ by $-a$ in Eq.~(\ref{statement}), it trivially follows that
\begin{equation}\label{statement_2}
\Pr\left[\Lambda_{N}-\sum_{u=1}^{N}\Pr\left(\xi_{u}=1|\mathcal{F}_{u-1}\right)\geq{}\left[b+a\left(\frac{2\Lambda_{N}}{N}-1\right)\right]\sqrt{N}\right]\leq{}\exp\left[\frac{-2(b^{2}-a^{2})}{\displaystyle{\left(1-\frac{4{}a}{3\sqrt{N}}\right)^{2}}}\right].
\end{equation}
\subsection{Direct Kato bounds}
Eq.~(\ref{statement_2}) provides a family of lower bounds on $\sum_{u=1}^{N}\Pr\left(\xi_{u}=1|\mathcal{F}_{u-1}\right)$ given $\Lambda_{N}$, and we want to reach the tightest possible bound compatible with a certain error probability $\epsilon$. Hence, among all pairs $(a,b)$ that fulfill the $\epsilon$ error probability condition, we must pick the one that minimizes the deviation term.  However, since the latter depends on the realization $\Lambda_{N}$, this task requires to come up with a preliminary guess, say $\tilde{\Lambda}_{N}$, of $\Lambda_{N}$, with respect to which the minimization can be carried out. In short, the problem to address is
\begin{eqnarray}\label{Kato_L}
&&\argmin_{a,b}\hspace{.2cm}\left[b+a\left(\frac{2\tilde{\Lambda}_{N}}{N}-1\right)\right]\sqrt{N}\nonumber\\
&&\textup{s.t.}\hspace{.3cm}\exp\left[\frac{-2(b^{2}-a^{2})}{\left(1-\displaystyle{\frac{4{}a}{3\sqrt{N}}}\right)^{2}}\right]=\epsilon,\hspace{.2cm}b\geq{}\abs{a}.
\end{eqnarray}
Its solution is~\cite{Guille}
\begin{eqnarray}
&&a={{3\left\{9\sqrt{2}N\left(N-2\tilde{\Lambda}_{N}\right)\sqrt{-\ln{\epsilon}\left[9\tilde{\Lambda}_{N}\left(N-\tilde{\Lambda}_{N}\right)-2N\ln{\epsilon}\right]}+16N^{3/2}\ln^{2}{\epsilon}-72\tilde{\Lambda}_{N}\sqrt{N}\left(N-\tilde{\Lambda}_{N}\right)\ln {\epsilon}\right\}}\over{4\left(9N-8\ln {\epsilon}\right)\left[9\tilde{\Lambda}_{N}\left(N-\tilde{\Lambda}_{N}\right)-2N\ln {\epsilon}\right]}},\nonumber \\
&&b={{\sqrt{18Na^2-\left(16a^2-24\sqrt{N}a+9N\right)\ln{\epsilon}}}\over{3\sqrt{2N}}},
\end{eqnarray}
leading to the bound
\begin{equation}\label{claim}
\sum_{u=1}^{N}\Pr\left(\xi_{u}=1|\mathcal{F}_{u-1}\right)\overset{\epsilon}{>}K^{\rm L}_{N,\epsilon}\left(\Lambda_{N}\right)
\end{equation}
for $K^{\rm L}_{N,\epsilon}\left(\Lambda_{N}\right)=\Lambda_{N}-\left[b+a\left(2{\Lambda}_{N}/N-1\right)\right]\sqrt{N}$. Of course, the closer $\Lambda_{N}$ happens to be from $\tilde{\Lambda}_{N}$, the tighter the bound becomes, but it holds true in any case.\\

Analogously, one can tune $a$ and $b$ in Eq.~(\ref{statement}) to reach the tightest possible upper bound on $\sum_{u=1}^{N}\Pr\left(\xi_{u}=1|\mathcal{F}_{u-1}\right)$ compatible with a fixed error probability $\epsilon$, should the guess $\tilde{\Lambda}_{N}$ become true. The corresponding problem is
\begin{eqnarray}\label{Kato_U}
&&\argmin_{a,b}\hspace{.2cm}\left[b+a\left(\frac{2\tilde{\Lambda}_{N}}{N}-1\right)\right]\sqrt{N}\nonumber\\
&&\textup{s.t.}\hspace{.3cm}\exp\left[\frac{-2(b^{2}-a^{2})}{\left(1+\displaystyle{\frac{4{}a}{3\sqrt{N}}}\right)^{2}}\right]=\epsilon,\hspace{.2cm}b\geq{}\abs{a}.
\end{eqnarray}
Its solution is~\cite{Guille}
\begin{eqnarray}
&&a={{3\left\{9\sqrt{2}N\left(N-2\tilde{\Lambda}_{N}\right)\sqrt{-\ln{\epsilon}\left[9\tilde{\Lambda}_{N}\left(N-\tilde{\Lambda}_{N}\right)-2N\ln{\epsilon}\right]}-16N^{3/2}\ln^{2}{\epsilon}+72\tilde{\Lambda}_{N}\sqrt{N}\left(N-\tilde{\Lambda}_{N}\right)\ln {\epsilon}\right\}}\over{4\left(9N-8\ln {\epsilon}\right)\left[9\tilde{\Lambda}_{N}\left(N-\tilde{\Lambda}_{N}\right)-2N\ln{\epsilon}\right]}},\nonumber \\
&&b={{\sqrt{18Na^2-\left(16a^2+24\sqrt{N}a+9N\right)\ln{\epsilon}}}\over{3\sqrt{2N}}},
\end{eqnarray}
leading to the bound
\begin{equation}
\sum_{u=1}^{N}\Pr\left(\xi_{u}=1|\mathcal{F}_{u-1}\right)\overset{\epsilon}{<}K^{\rm U}_{N,\epsilon}\left(\Lambda_{N}\right)
\end{equation}
for $K^{\rm U}_{N,\epsilon}\left(\Lambda_{N}\right)=\Lambda_{N}+\left[b+a\left(2{\Lambda}_{N}/N-1\right)\right]\sqrt{N}$.
\subsection{Reverse Kato bounds}
Similar arguments allow to bound $\Lambda_{N}$ given $S_{N}=\sum_{u=1}^{N}\Pr\left(\xi_{u}=1|\mathcal{F}_{u-1}\right)$, based on a preliminary guess $\tilde{S}_{N}$ of the latter to pick suitable values of $a$ and $b$.

On the one hand, from Eq.~(\ref{statement}) one can readily show that $\Lambda_{N}\overset{\epsilon}{>}\bar{K}^{\rm L}_{N,\epsilon}\left(S_{N}\right)$ for $\bar{K}^{\rm L}_{N,\epsilon}\left(S_{N}\right)=\left[\sqrt{N}S_{N}+N(a-b)\right]/(2a+\sqrt{N})$, where $(a,b)$ is the solution to
\begin{eqnarray}\label{Kato_bar_L}
&&\argmax_{a,b}\hspace{.2cm}\frac{\sqrt{N}\tilde{S}_{N}+N(a-b)}{2a+\sqrt{N}}\nonumber\\
&&\textup{s.t.}\hspace{.3cm}\exp\left[\frac{-2(b^{2}-a^{2})}{\left(1+\displaystyle{\frac{4{}a}{3\sqrt{N}}}\right)^{2}}\right]=\epsilon,\hspace{.2cm}b\geq{}\abs{a},
\end{eqnarray}
given by~\cite{Navs}
\begin{eqnarray}\label{ab_L}
&&a=\frac{3\sqrt{N}\left\{9\left(N-2\tilde{S}_{N}\right)\sqrt{N\ln{\epsilon}\left[N\ln{\epsilon}-18\tilde{S}_{N}\left(N-\tilde{S}_{N}\right)\right]}-4N\ln^{2}{\epsilon}-9\left(8\tilde{S}_{N}^2-8N\tilde{S}_{N}+3N^2\right)\ln {\epsilon}\right\}}{4\left\{4N\ln^{2}{\epsilon}+36\left(2\tilde{S}_{N}^2-2N\tilde{S}_{N}+N^2\right)\ln {\epsilon}+81N\tilde{S}_{N}\left(N-\tilde{S}_{N}\right)\right\}},\nonumber \\
&&b=\frac{1}{3}\sqrt{9a^2-{{\left(4a+3\sqrt{N}\right)^2\ln{\epsilon}}\over{2N}}}.
\end{eqnarray}
On the other hand, from Eq.~(\ref{statement_2}) one can show that $\Lambda_{N}\overset{\epsilon}{<}\bar{K}^{\rm U}_{N,\epsilon}\left(S_{N}\right)$ for $\bar{K}^{\rm U}_{N,\epsilon}\left(S_{N}\right)=\left[\sqrt{N}S_{N}-N(a-b)\right]/(\sqrt{N}-2a)$, where $(a,b)$ is the solution to
\begin{eqnarray}\label{Kato_bar_U}
&&\argmin_{a,b}\hspace{.2cm}\frac{\sqrt{N}\tilde{S}_{N}-N(a-b)}{\sqrt{N}-2a}\nonumber\\
&&\textup{s.t.}\hspace{.3cm}\exp\left[\frac{-2(b^{2}-a^{2})}{\left(1-\displaystyle{\frac{4{}a}{3\sqrt{N}}}\right)^{2}}\right]=\epsilon,\hspace{.2cm}b\geq{}\abs{a},
\end{eqnarray}
given by~\cite{Navs}
\begin{eqnarray}\label{ab_U}
&&a={{3\sqrt{N}\left\{9\left(N-2\tilde{S}_{N}\right)\sqrt{N\ln{\epsilon}\left[N\ln{\epsilon}+18\tilde{S}_{N}\left(\tilde{S}_{N}-N\right)\right]}+4N\ln^{2}{\epsilon}+9\left(8\tilde{S}_{N}^2-8N\tilde{S}_{N}+3N^2\right)\ln{\epsilon}\right\}}\over{4\left\{4N\ln^{2}{\epsilon}+36\left(2\tilde{S}_{N}^2-2N\tilde{S}_{N}+N^2\right)\ln{\epsilon}+81N\tilde{S}_{N}\left(N-\tilde{S}_{N}\right)\right\}}},\nonumber \\
&&b={{\sqrt{18Na^2-\left(16a^2-24\sqrt{N}a+9N\right)\ln{\epsilon}}}\over{3\sqrt{2N}}}.
\end{eqnarray}
To finish with, let us point out a minor technical issue. Although we have established that $\Lambda_{N}\overset{\epsilon}{>}\bar{K}^{\rm L}_{N,\epsilon}\left(S_{N}\right)$ ($\Lambda_{N}\overset{\epsilon}{<}\bar{K}^{\rm U}_{N,\epsilon}\left(S_{N}\right)$), if only a lower (upper) bound on $S_{N}$ were known, say $S_{N}^{\rm L}$ ($S_{N}^{\rm U}$), it would be of interest to guarantee that $\Lambda_{N}\overset{\epsilon}{>}\bar{K}^{\rm L}_{N,\epsilon}\left(S_{N}^{\rm L}\right)$ ($\Lambda_{N}\overset{\epsilon}{<}\bar{K}^{\rm U}_{N,\epsilon}\left(S_{N}^{\rm U}\right)$). In fact, this is indeed the case as long as $\bar{K}^{\rm L}_{N,\epsilon}(x)$ ($\bar{K}^{\rm U}_{N,\epsilon}(x)$) is a non-decreasing function, which corresponds to the regime $a\geq{}-\sqrt{N}/2$ ($a\leq{}\sqrt{N}/2$). Namely, the desired inequalities hold if we replace $a$ in Eq.~(\ref{ab_L}) (Eq.~(\ref{ab_U})) by $a'=\max\left\{a,-\sqrt{N}/2\right\}$ $\left(a'=\min\left\{a,\sqrt{N}/2\right\}\right)$.
\section{Use case for Kato's inequality}\label{example}
To exemplify the application of Kato's inequality~\cite{Kato}, below we show that
\begin{equation}\label{desired_claim} N{}q_{\mathds{K}}\left\langle{1}\right\rangle_{\Omega_{j}^{\rm key}}Q_{j}^{\rm key}\overset{\epsilon}{>}K^{\rm L}_{N,\epsilon}(M_{j}^{\rm key})
\end{equation}
within the adversary model presented in Sec.~\ref{Assumptions} (note that Eq.~(\ref{desired_claim}) is just one of the multiple usages of Kato's inequality in the main text).

For this purpose, let
\begin{equation}\label{sequence}
\xi_{u}=\left\{
\begin{array}{ll}
1 & \rm{if}\hspace{.2cm}\text{Alice postselects $\sigma_{j}^{\rm key}$, Bob picks the key basis and a ``click" occurs in round $u$,}\\
0 & \rm{otherwise}, \\
\end{array} 
\right.
\end{equation}
for $u=1,2\ldots{}N$ (such that $\sum_{u=1}^{N}\xi_{u}=M_{j}^{\rm key}$), and let $\mathcal{F}_{0}\subseteq{}\mathcal{F}_{1}\subseteq{}\ldots{}\subseteq{}\mathcal{F}_{N}$ be the chain of $\sigma$-algebras induced by the sequence $R_{0}$, $(\xi_{1},R_{1})$, $(\xi_{2},R_{2}),\ \ldots{},\ (\xi_{N},R_{N})$. Here, we recall that $R_{u}$ denotes the state of Eve's classical register at the end of round $u$, and $R_{0}$ denotes an arbitrary initial state of the register (see Section~\ref{Assumptions}). From the definitions of $\xi_{u}$ and $\mathcal{F}_{u}$, the condition $E\left(\xi_{v}|\mathcal{F}_{u}\right)=\xi_{v}$ follows for all $v\leq{}u$, and therefore Kato's inequality applies. Hence, from Eq.~(\ref{claim}) we have that $\sum_{u=1}^{N}\Pr\left(\xi_{u}=1|\mathcal{F}_{u-1}\right)\overset{\epsilon}{>}K^{\rm L}_{N,\epsilon}\bigl(M_{j}^{\rm key}\bigr)$, such that the desired claim ---Eq.~(\ref{desired_claim})--- holds if
\begin{equation}\label{crucial}
\sum_{u=1}^{N}\Pr\left(\xi_{u}=1|\mathcal{F}_{u-1}\right)=N{}q_{\mathds{K}}\left\langle{1}\right\rangle_{\Omega_{j}^{\rm key}}Q_{j}^{\rm key}.
\end{equation}
This is what we establish next. On the one hand, by definition of $Q_{j}^{\rm key}$, the r.h.s. of Eq.~(\ref{crucial}) can be written as $\sum_{u=1}^{N}q_{\mathds{K}}\left\langle{1}\right\rangle_{\Omega_{j}^{\rm key}}p^{(u)}\bigl(\mathrm{click}|R_{u-1},\sigma_{j}^{\rm key},\mathrm{key}\bigr)$. As for the l.h.s., we have $\Pr\left(\xi_{u}=1|\mathcal{F}_{u-1}\right)=p^{(u)}\bigl(\sigma_{j}^{\rm key},\mathrm{key},\mathrm{click}|\mathcal{F}_{u-1}\bigr)=q_{\mathds{K}}\left\langle{1}\right\rangle_{\Omega_{j}^{\rm key}}p^{(u)}\bigl(\mathrm{click}|\mathcal{F}_{u-1},\sigma_{j}^{\rm key},\mathrm{key}\bigr)$, simply using the definition of $\xi_{u}$ and the fact that $p^{(u)}\bigl(\sigma_{j}^{\rm key},\mathrm{key}|\mathcal{F}_{u-1}\bigr)=q_{\mathds{K}}\left\langle{1}\right\rangle_{\Omega_{j}^{\rm key}}$. In short, Eq.~(\ref{crucial}) follows if
\begin{equation}\label{hidden_condition}
p^{(u)}\left(\mathrm{click}|\mathcal{F}_{u-1},\sigma_{j}^{\rm key},\mathrm{key}\right)=p^{(u)}\left(\mathrm{click}|R_{u-1},\sigma_{j}^{\rm key},\mathrm{key}\right).
\end{equation}
Remarkably though, this condition is true by hypothesis in our adversary model, since by assumption the only influence $\mathcal{F}_{u-1}$ may have on round $u$ is captivated by $R_{u-1}$.
%

\section{Channel model}\label{channel_model}
For the simulations of Sec.~\ref{performance}, the protocol observables are set to their expected values according to a standard channel model thoroughly described in~\cite{Zapatero}. For simplicity, the model disregards any possible misalignment occurring in the QKD link, and an active QKD receiver is assumed for comparison with prior work.

Given the overall efficiency of the system, $\eta$ (accounting for both channel and detection losses), and the dark count probability of Bob's detectors, $p_{\rm d}$, for all possible intensity settings $j$ the model yields
\begin{equation}
\mathds{E}\left[M_{j}^{\rm key}\right]=N{}q_{\mathds{K}}
\left\langle{1-(1-p_{\rm d})^{2}e^{-I\eta}}\right\rangle_{\Omega_{j}^{\rm key}},\hspace{.2cm}\mathds{E}\left[M_{j}^{\rm test}\right]=N{}q_{\mathds{T}}
\left\langle{1-(1-p_{\rm d})^{2}e^{-I\eta}}\right\rangle_{\Omega_{j}^{\rm test}}
\end{equation}
for the numbers of counts, and
\begin{equation}
\mathds{E}\left[m_{j}^{\rm test}\right]=2\times{}N{}q_{\mathds{T}}\ \Biggl\langle{\frac{1}{2}\left[1-(1-p_{\rm d})^{2}e^{\displaystyle{-I\eta}}\right]-\frac{1}{2}(1-p_{\rm d})\left[e^{\displaystyle{-\frac{I\eta\left(1-\sin\theta\cos\phi\right)}{2}}}-e^{\displaystyle{-\frac{I\eta\left(1+\sin\theta\cos\phi\right)}{2}}}\right]}\Biggr\rangle_{\Omega_{j}^{\rm H}}
\end{equation}
for the numbers of test-basis error counts. In a similar fashion, the size of the sifted keys, $M_{\rm key}=|\mathcal{X}^{\rm key}|$, and the corresponding number of bit errors, say $m_{\rm key}$, obey
\begin{equation}\label{sifted}
\mathds{E}\left[M_{\rm key}\right]=N{}q_{\mathds{K}}
\left\langle{1-(1-p_{\rm d})^{2}e^{-I\eta}}\right\rangle_{\Omega^{\rm key}}
\end{equation}
and
\begin{equation}\label{bit_errors}
\mathds{E}\left[m_{\rm key}\right]=2\times{}N{}q_{\mathds{K}}\Biggl\langle{\frac{1}{2}\left[1-(1-p_{\rm d})^{2}e^{\displaystyle{-I\eta}}\right]-\frac{1}{2}(1-p_{\rm d})\left[e^{\displaystyle{-I\eta\sin^{2}{\frac{\theta}{2}}}}-e^{\displaystyle{-I\eta\cos^{2}{\frac{\theta}{2}}}}\right]}\Biggr\rangle_{\Omega^{\rm R}},
\end{equation}
where we recall that $\Omega^{\rm key}=\bigcup_{j\in\Gamma^{\rm key}}\Omega_{j}^{\rm key}$ and we have also introduced $\Omega^{\rm R}=\bigcup_{j\in\Gamma^{\rm key}}\Omega^{\rm R}_{j}$ (note that Eq.~(\ref{bit_errors}) explicitly uses the fact that both polar caps contribute equally to the bit errors within the channel model). Importantly, $\mathds{E}\left[M_{\rm key}\right]$ and $\mathds{E}\left[m_{\rm key}\right]$ determine the model we use for the error correction leakage in the simulations via
\begin{equation}\label{leakage}
\lambda_{\rm EC}=f_{\rm EC}\mathds{E}\left[M_{\rm key}\right]h\left(\frac{\mathds{E}\left[m_{\rm key}\right]}{\mathds{E}\left[M_{\rm key}\right]}\right),
\end{equation}
where $f_{\rm EC}$ describes the efficiency of the error correction.

Finally, in this simple model, the targets of the linear programs (Eq.~(\ref{lp_1}) and Eq.~(\ref{lp_2})) fulfil
\begin{equation}\label{ideal_limit_1}
y_{\alpha,1}=1-(1-p_{\rm d})^{2}(1-\eta)
\end{equation}
and
\begin{equation}\label{ideal_limit_2}
e_{1}^{\rm ideal}=p_{\rm d}^{2}\frac{1}{2}+p_{\rm d}(1-p_{\rm d})(1-\eta)+p_{\rm d}(1-p_{\rm d})\eta{}\frac{1}{2}.
\end{equation}
While Eq.~(\ref{ideal_limit_1}) follows trivially, Eq.~(\ref{ideal_limit_2}) reflects the fact that, in the model, a perfect single-photon state (\textit{i.e.} $\ket{\rm H}$ or $\ket{\rm V}$) can only trigger an error if either both detectors experience a dark count ---which happens with a probability of $p_{\rm d}^{2}$--- or only the wrong detector experiences a dark count ---which happens with a probability of $p_{\rm d}(1-p_{\rm d})$---. Conditioned on the first event, the error probability is $1/2$, because double clicks are randomly assigned to a specific outcome. Conditioned on the second event, an error occurs if either the photon is lost (corresponding to the $(1-\eta)$ term) or the photon is not lost and thus fires the correct detector, but the resulting double click incurs in a bit error through the random assignment (corresponding to the $\eta/2$ term).

We remark that Eq.~(\ref{ideal_limit_1}) and Eq.~(\ref{ideal_limit_2}) provide the estimates that one would obtain in the perfect PE limit, which is plotted in Fig.~\ref{fig:performance} of the main text for comparison purposes.
\end{widetext}


%
%

\begin{thebibliography}{50}
%
\bibitem{PortmannRenner}
Portmann, C. \& Renner, R. Security in quantum cryptography. \textit{Reviews of Modern Physics} \textbf{94}, 025008 (2022).
%
\bibitem{BB84}
Bennett, C. H. \& Brassard, G. Quantum cryptography: public key distribution and coin tossing. \textit{In Proceedings IEEE International Conference on Computers, Systems \& Signal Processing}, 175–179 (IEEE, NY, Bangalore, India, 1984).
%
\bibitem{IdQuantique}
https://www.idquantique.com
%
\bibitem{Toshiba}
https://www.global.toshiba/ww/products-solutions/security-ict/qkd.html
%
\bibitem{qubridge}
https://qubridge.io/
%
\bibitem{ThinkQuantum}
https://www.thinkquantum.com/
%
\bibitem{SECOQC}
Peev, M. \textit{et al}. The SECOQC quantum key distribution network in Vienna. \textit{New Journal of Physics} \textbf{11}, 075001 (2009).
%
\bibitem{FangXing}
Xu, F. \textit{et al}. Field experiment on a robust hierarchical metropolitan quantum cryptography network. \textit{Chinese Science Bulletin} \textbf{54}, 2991-2997 (2009).
%
\bibitem{Sasaki}
Sasaki, M. \textit{et al}. Field test of quantum key distribution in the Tokyo QKD Network. \textit{Optics Express} \textbf{19}, 10387-10409 (2011).
%
\bibitem{Stucki}
Stucki, D. \textit{et al}. Long-term performance of the SwissQuantum quantum key distribution network in a field environment. \textit{New Journal of Physics} \textbf{13}, 123001 (2011).
%
\bibitem{Chen}
Chen, Yu-Ao \textit{et al}. An integrated space-to-ground quantum communication network over 4,600 kilometres. \textit{Nature} \textbf{589}, 214-219 (2021).
%
\bibitem{Feihu}
Xu, F., Ma, X., Zhang, Q., Lo, H.-K. \& Pan, J.-W. Secure quantum key distribution with realistic devices. \textit{Reviews of Modern Physics} \textbf{92}, 025002 (2020).
%
\bibitem{Margarida}
Pereira, M., Kato, G., Mizutani, A., Curty, M. \& Tamaki, K. Quantum key distribution with correlated sources. \textit{Science Advances} \textbf{6}, eaaz4487 (2020).
%
\bibitem{correlations}
Zapatero, V., Navarrete, Á., Tamaki, K. \& Curty, M. Security of quantum key distribution with intensity correlations. \textit{Quantum} \textbf{5}, 602 (2021).
%
\bibitem{correlations_2}
Sixto, X., Zapatero, V. \& Curty, M. Security of decoy-state quantum key distribution with correlated intensity fluctuations. \textit{Physical Review Applied} \textbf{18}, 044069 (2022).
%
%
\bibitem{Vakhitov}
Vakhitov, A., Makarov, V. \& Hjelme, D. R. Large pulse attack as a method of conventional optical eavesdropping in quantum cryptography. \textit{Journal of Modern Optics} \textbf{48}, 2023-2038 (2001).
%
\bibitem{Gisin}
Gisin, N., Fasel, S., Kraus, B., Zbinden, H. \& Ribordy, G. Trojan-horse attacks on quantum-key-distribution systems. \textit{Physical Review A} \textbf{73}, 022320 (2006).
%
\bibitem{Jain1}
Jain, N., Stiller, B., Khan, I., Makarov, V., Marquardt, C. \& Leuchs, G. Risk analysis of Trojan-horse attacks on practical quantum key distribution systems. \textit{IEEE Journal of selected topics in quantum electronics} \textbf{21}, 168-177 (2014).
%
\bibitem{Jain2}
Jain, N., Anisimova, E., Khan, I., Makarov, V., Marquardt, C. \& Leuchs, G. Trojan-horse attacks threaten the security of practical quantum cryptography. \textit{New Journal of Physics} \textbf{16}, 123030 (2014).
%
\bibitem{Sajeed}
Sajeed, S., Minshull, C., Jain, N. \& Makarov, V. Invisible Trojan-horse attack. \textit{Scientific Reports} \textbf{7}, 1-7 (2017).
%
\bibitem{Lucamarini}
Lucamarini, M., Choi, I., Ward, M. B., Dynes, J. F., Yuan, Z. L. \& Shields, A. J. Practical security bounds against the trojan-horse attack in quantum key distribution. \textit{Physical Review X} \textbf{5}, 031030 (2015).
%
\bibitem{Tamaki}
Tamaki, K., Curty, M. \& Lucamarini, M. Decoy-state quantum key distribution with a leaky source. \textit{New Journal of Physics} \textbf{18}, 065008 (2016).
%
\bibitem{Weilong}
Wang, W., Tamaki, K. \& Curty, M. Finite-key security analysis for quantum key distribution with leaky sources. \textit{New Journal of Physics} \textbf{20}, 083027 (2018).
%
\bibitem{Navs}
Navarrete, Á. \& Curty, M. Improved finite-key security analysis of quantum key distribution against Trojan-horse attacks. \textit{Quantum Science and Technology} \textbf{7}, 035021 (2022).
%
\bibitem{WCP_1}
Curty, M., Moroder, T., Ma, X. \& Lütkenhaus, N. Non-Poissonian statistics from Poissonian light sources with application to passive decoy state quantum key distribution. \textit{Optics Letters} \textbf{34}, 3238-3240 (2009).
%
\bibitem{WCP_2}
Curty, M., Ma, X., Qi, B. \& Moroder, T. Passive decoy-state quantum key distribution with practical light sources. \textit{Physical Review A} \textbf{81}, 022310 (2010).
%
\bibitem{WCP_3}
Li, Y., Bao, W. S., Li, H. W., Zhou, C. \& Wang, Y. Passive decoy-state quantum key distribution using weak coherent pulses with intensity fluctuations. \textit{Physical Review A} \textbf{89}, 032329 (2014).
%
\bibitem{WCP_4}
Shan, Y. Z., Sun, S. H., Ma, X. C., Jiang, M. S., Zhou, Y. L. \& Liang, L. M. Measurement-device-independent quantum key distribution with a passive decoy-state method. \textit{Physical Review A} \textbf{90}, 042334 (2014).
%
\bibitem{experiment_1}
Zhang, Y. \textit{et al}. Practical non-Poissonian light source for passive decoy state quantum key distribution. \textit{Optics Letters} \textbf{35}, 3393-3395 (2010).
%
\bibitem{experiment_2}
Zhang, Y. \textit{et al}. Experimental demonstration of passive decoy state quantum key distribution. \textit{Chinese Physics B} \textbf{21}, 100307 (2012).
%
\bibitem{experiment_3}
Krapick, S., Stefszky, M. S., Jachura, M., Brecht, B., Avenhaus, M. \& Silberhorn, C. Bright integrated photon-pair source for practical passive decoy-state quantum key distribution. \textit{Physical Review A} \textbf{89}, 012329 (2014).
%
\bibitem{experiment_4}
Sun, Q. C. \textit{et al}. Experimental passive decoy-state quantum key distribution. \textit{Laser Physics Letters} \textbf{11}, 085202 (2014).
%
\bibitem{experiment_5}
Sun, S. H., Tang, G. Z., Li, C. Y. \& Liang, L. M. Experimental demonstration of passive-decoy-state quantum key distribution with two independent lasers. \textit{Physical Review A} \textbf{94}, 032324 (2016).
%
\bibitem{passive_BB84}
Curty, M., Ma, X., Lo, H.-K. \& Lütkenhaus, N. Passive sources for the Bennett-Brassard 1984 quantum-key-distribution protocol with practical signals. \textit{Physical Review A} \textbf{82}, 052325 (2010).
%
\bibitem{Mike}
Wang, W. \textit{et al}. Fully-passive quantum key distribution. \textit{Physical Review Letters} \textbf{130}, 220801 (2023).
%
\bibitem{Zapatero}
Zapatero, V., Wang, W. \& Curty, M. A fully passive transmitter for decoy-state quantum key distribution. \textit{Quantum Science and Technology} \textbf{8}, 025014 (2023).
%
\bibitem{Lu}
Lu, F. Y. \textit{et al}. Experimental demonstration of fully passive quantum key distribution. Available at https://arxiv.org/abs/2304.11655 (2023).
%
\bibitem{Hu}
Hu, C., Wang, W., Chan, K. S., Yuan, Z. \& Lo, H.-K. Proof-of-principle demonstration of fully-passive quantum key distribution. Available at https://arxiv.org/abs/2304.10374 (2023).
%
\bibitem{Mike_2}
Wang, W., Wang, R. \& Lo, H.-K. Fully-passive twin-field quantum key distribution. Available at https://arxiv.org/abs/2304.12062 (2023).
%
\bibitem{MDI}
Lo, H.-K., Curty, M. \& Qi, B. Measurement-device-independent quantum key distribution. \textit{Physical Review Letters} \textbf{108}, 130503 (2012).
%
\bibitem{Kato}
Kato, G. Concentration inequality using unconfirmed knowledge. Available at https://arxiv.org/abs/2002.04357 (2020).
%
\bibitem{Metger}
Metger, T. \& Renner, R. Security of quantum key distribution from generalised entropy accumulation. Available at https://arxiv.org/abs/2203.04993 (2022).
%
%
%
%
\bibitem{Wegman}
Carter, J. L. \& Wegman, M. N. Universal classes of hash functions. \textit{In Proceedings of the 9th annual ACM symposium on theory of computing}, 106-112 (1977).
%
\bibitem{LHL}
Tomamichel, M., Schaffner, C., Smith, A. \& Renner, R. Leftover hashing against quantum side information. \textit{IEEE Transactions on Information Theory} \textbf{57}, 5524-5535 (2011).
%
\bibitem{comment}
Namely, for, say, the first interval, we are asserting that $\Pr\biggl[N{}q_{\mathds{K}}\left\langle{1}\right\rangle_{\Omega_{j}^{\rm key}}Q_{j}^{\rm key}\in\biggl(K^{\rm L}_{N,\epsilon}\bigl(M_{j}^{\rm key}\bigr),K^{\rm U}_{N,\epsilon}\bigl(M_{j}^{\rm key}\bigr)\biggr)\biggr]\geq{}1-2\epsilon$.
%
\bibitem{union}
Bonferroni, C. Teoria statistica delle classi e calcolo delle probabilita. \textit{Pubblicazioni del R Istituto Superiore di Scienze Economiche e Commericiali di Firenze} \textbf{8}, 3-62 (1936).
%
\bibitem{detail}
Note that Eq.~(\ref{Kato_LP}) trivially holds if we consider closed intervals instead, which we do for the LPs.
%
%
\bibitem{Serfling}
Serfling, R. J. Probability inequalities for the sum in sampling without replacement. \textit{The Annals of Statistics} \textbf{2}, 39-48 (1974).
%
\bibitem{Xoel}
Sixto, X., Currás-Lorenzo, G., Tamaki, K. \& Curty, M. Secret key rate bounds for quantum key distribution with non-uniform phase randomization. Available at https://arxiv.org/abs/2304.03562 (2023).
%
\bibitem{Nielsen}
Nielsen, M. \& Chuang, I. \textit{Quantum Computation and Quantum Information}, Cambridge University Press (2000).
%
%
\bibitem{Tomamichel}
Tomamichel, M., Lim, C. C. W., Gisin, N. \& Renner, R. Tight finite-key analysis for quantum cryptography. \textit{Nature Communications} \textbf{3}, 1-6 (2012).
%
%
%
\bibitem{Vitanov}
 Vitanov, A., Dupuis, F., Tomamichel, M. \& Renner, R. Chain rules for smooth min-and max-entropies. \textit{IEEE Transactions on Information Theory} \textbf{59}, 2603-2612 (2013).
 %
 \bibitem{EUR}
Tomamichel, M. \& Renner, R. Uncertainty relation for smooth entropies. \textit{Physical Review Letters} \textbf{106}, 110506 (2011).
 %
  \bibitem{Leverrier}
 Tomamichel, M. \& Leverrier, A. A largely self-contained and complete security proof for quantum key distribution. \textit{Quantum} \textbf{1}, 14 (2017).
 %
 \bibitem{Lint}
 Lint, J. H. \textit{Introduction to coding theory}, Springer Berlin Heidelberg (1999).
 %
 \bibitem{thesis}
 Tomamichel, M. A framework for non-asymptotic quantum information theory. \textit{Dissertation PhD Thesis}, ETH Zurich (2012).
 %
  \bibitem{QIP}
 Tomamichel, M. \textit{Quantum information processing with finite resources: mathematical foundations}. SpringerBriefs in Mathematical Physics \textbf{5}, Springer International Publishing (2016).
 %
\bibitem{Guille}
Currás-Lorenzo, G., Navarrete, Á., Pereira, M. \& Tamaki, K. Finite-key analysis of loss-tolerant quantum key distribution based on random sampling theory. \textit{Physical Review A} \textbf{104}, 012406 (2021).
%
\end{thebibliography}
\end{document}